\documentclass[12pt,technote, onecolumn, draft]{IEEEtran}
\usepackage{latexsym}
\usepackage{mathrsfs}
\usepackage{multirow}
\usepackage{amsfonts,amssymb,amsmath,amsthm,bm}
\usepackage{color}
\usepackage{cite}
\usepackage{comment}
\newtheorem{theorem}{Theorem}
\newtheorem{example}{Example}
\newtheorem{definition}{Definition}
\newtheorem{remark}{Remark}
\newtheorem{corollary}{Corollary}

\newtheorem{proposition}{Proposition}

\begin{document}

\title{Characterizations and MacWilliams-Like Identities of Weakly Regular Bent Partitions $^{\dag}$}
\author{Jiaxin Wang, Yadi Wei, Fang-Wei Fu, Jin Li, Fulin Li
\IEEEcompsocitemizethanks{\IEEEcompsocthanksitem Jiaxin Wang, Jin Li and Fulin Li are with the School of Mathematics, Hefei University of Technology, Hefei 230601, China, Emails: wjiaxin@hfut.edu.cn, lijin\_0102@126.com, lflsxx66@163.com, Yadi Wei and Fang-Wei Fu are with the Chern Institute of Mathematics and LPMC, Nankai University, Tianjin 300071, China, Emails: wydecho@mail.nankai.edu.cn, fwfu@nankai.edu.cn.
}
\thanks{$^\dag$Jiaxin Wang is supported by the National Natural Science Foundation of China (Grant Nos.126011028 and 12671649), Fang-Wei Fu is supported by the National Key Research and Development Program of China (Grant No.2022YFA1005000), the National Natural Science Foundation of China (Grant No. 62371259), the Fundamental Research Funds for the Central Universities of China (Nankai University), and the Nankai Zhide Foundation, Jin Li is supported by the National Natural Science Foundation of China (Grant No.12671649), and Fulin Li is supported by the National Natural Science Foundation of China (Grant No.12471491)}
\thanks{manuscript submitted October 30, 2026}
}

\maketitle

\begin{abstract}
Bent partitions play a significant role in the theory of bent functions. In this paper, we study weakly regular bent partitions, that is, all generated bent functions are weakly regular  (called in class $\mathcal{WRBP}$). Let $\Gamma=\{A_{i}, i \in V_{m}\}$ be a bent partition of $V_{n}$ in class $\mathcal{WRBP}$, where $V_{n}$ denotes an $n$-dimensional vector space over the prime field $\mathbb{F}_{p}$, 
$n$ is even, $m \leq \frac{n}{2}$, and $m \geq 2$ if $p=2$. We first characterize bent partition $\Gamma$ in terms of the duals of component functions of vectorial bent functions. In particular, we prove that if $p \neq 3$, then all bent functions generated by $\Gamma$ are regular. Building on the above result, we give characterizations of bent partition $\Gamma$ in terms of two classes of linear codes. When $n \geq 4$ and $m=\frac{n}{2}$, except for the case where $(p, n)=(2, 4)$ and $|A_{i_{0}}|=1$ for some $i_{0}$, we show that $\Gamma$ is obtained from a spread (after a shift). Particularly, any normal bent partition of $V_{n}$ of depth $p^{\frac{n}{2}}$ in class $\mathcal{WRBP}$ must be obtained from a spread (after a shift), where $n \geq 4$. We establish the Macwilliams-like identities for bent partitions in class $\mathcal{WRBP}$. Furthermore, if $p$ is odd, we present some nonexistence results on bent partitions in class $\mathcal{WRBP}$.
\end{abstract}

\begin{IEEEkeywords}
Bent function; bent partition; linear code; spread; MacWilliams identity
\end{IEEEkeywords}

\section{Introduction}
Bent functions introduced in \cite{R1976On} have been extensively studied due to their significance in cryptography, coding theory, sequences and combinatorics. In \cite{AM2022Be}, Anbar and Meidl proposed the concept of bent partitions, which play an important role in constructing bent functions. Furthermore, bent partitions have applications in constructing partial difference sets, association schemes and LP-packings (see e.g. \cite{AAKM2024Be,AAKM2025Be,AKM2022Be,AKM2024Am,WFW2023Be}). 

Let $V_{n}$ be an $n$-dimensional vector space over the prime field $\mathbb{F}_{p}$, where $n$ is even. A partition $\Gamma=\{A_{i}, 1 \leq i \leq K\}$ of $V_{n}$ (where $p \mid K$) is said to be a bent partition, if any $p$-ary function $f$ that takes every value $j \in \mathbb{F}_{p}$ on exactly $\frac{K}{p}$ of sets $A_{i}$ is a bent function. If all generated bent functions are weakly regular, then the bent partition $\Gamma$ is called weakly regular (also called in class $\mathcal{WRBP}$). By the results in \cite{AM2022Be} and \cite{WWF2026Fu}, the depth $K$ of bent partitions of $V_{n}$ in class $\mathcal{WRBP}$ satisfies $K=p^m$, where $m \leq \frac{n}{2}$. Up to now, all known bent partitions are in class $\mathcal{WRBP}$. There is a sufficient condition to construct bent partitions in class $\mathcal{WRBP}$ given in \cite{WWF2026Fu}. Specifically, the preimage set partition of a vectorial bent function $F: V_{n} \rightarrow V_{m}$ is a bent partition in class $\mathcal{WRBP}$, if the components $F_{c}, c \in V_{m} \backslash \{0\}$, are all regular or all weakly regular but not regular, and the duals $(F_{c})^{*}, c \in V_{m} \backslash \{0\}$, satisfy $(F_{c})^{*}(x)=G_{c}(x)+h(x)$ for some functions $G: V_{n} \rightarrow V_{m}$ and $h: V_{n} \rightarrow \mathbb{F}_{p}$. When $p=2$, \cite{WFWY2024A} proved that the sufficient condition given in \cite{WWF2026Fu} is also necessary. In the survey of bent partitions  \cite{AKM2026A}, some questions are listed, two of which are as follows:

\textbf{Question 1}: Prove or disprove that a (normal) bent partition $\Gamma=\{A_{i}, i \in V_{n/2}\}$ of $V_{n}$ of depth $p^{\frac{n}{2}}$ must be from a spread.

\textbf{Question 2}: When $p$ is odd, determine whether the sufficient condition of constructing bent partitions in class $\mathcal{WRBP}$ given in \cite{WWF2026Fu} (introduced as above) is also necessary.

For Question 1, when $n \geq 4, p \neq 2$, under the condition that $A_{i}=-A_{i}$ for all $i \in V_{n/2}$, and if $p=3$, then $A_{i} \neq \{0\}$ for all $i \in V_{n/2}$, \cite{WWFLL2026Co} showed that the bent partition $\Gamma$ must be obtained from a spread. In this paper, we consider the case that $\Gamma$ is in class $\mathcal{WRBP}$. Note that there exist bent partitions in class $\mathcal{WRBP}$ with $A_{i} \neq -A_{i}$ for some $i$ 
(see \cite{AFKMWW2026A,WWF2026Fu}). 

As an important result in coding theory, MacWilliams identities state that the weight enumerator of a linear code uniquely determines the weight enumerator of its dual code. In \cite{HL2016Ex,HLL2012Ma,HLL2014Bo,HLL2014Ne}, MacWilliams-like identities for $p$-ary bent functions are provided. Based on these results, some nonexistence results on $p$-ary bent functions are presented. 

In this paper, we study the characterizations and MacWilliams-like identities of bent partitions in class $\mathcal{WRBP}$. We summarize our main contributions as follows. 
\begin{itemize}
 \item For Question 2, we give a positive answer. In particular, we show that if $p \neq 3$, then all bent functions generated by the bent partition in class $\mathcal{WRBP}$ are regular. 
 \item We give characterizations of bent partitions in class $\mathcal{WRBP}$ in terms of two classes of linear codes.
 \item Let $n \geq 4$. For Question 1, under the cases that $p$ is odd, or $p=2, n \geq 6$, or $(p, n)=(2, 4)$ and $|A_{i}|
 \neq 1$ for all $i \in V_{n/2}$, we prove that the bent partition $\Gamma$ of $V_{n}$ of depth $p^{\frac{n}{2}}$ in class $\mathcal{WRBP}$ is obtained from a spread (after a shift). Particularly, any normal bent partition of $V_{n}$ of depth $p^{\frac{n}{2}}$ in class $\mathcal{WRBP}$ is obtained from a spread (after a shift). 
 \item We provide the Macwillams-like identities for bent partitions in class $\mathcal{WRBP}$. Furthermore, some nonexistence results on bent partitions in class $\mathcal{WRBP}$ are presented.
\end{itemize}

The rest of the paper is organized as follows. Section II gives the needed preliminaries. In Section III, we characterize bent partitions in class $\mathcal{WRBP}$ via the duals of components of vectorial bent functions and two classes of linear codes. We study the correspondence between bent partitions in class $\mathcal{WRBP}$ whose depth achieves the maximal value and spreads. In Section IV, we establish the MacWilliams-like identities for bent partitions in class $\mathcal{WRBP}$, based on which some nonexistence results are given. In Section V, a conclusion is given.

\section{Preliminaries}
In this section, we present some needed results. First, we give some fixed notations unless otherwise stated. 
\begin{itemize}
\item $p$ is a prime.
\item $\zeta_{p}$ is a complex primitive $p$-th root of unity.
\item $\mathbb{F}_{p^n}$ is the finite field with $p^n$ elements.
\item $\mathbb{F}_{p}^{n}$ is the vector space of the $n$-tuples over $\mathbb{F}_{p}$.
\item $V_{n}$ denotes an $n$-dimensional vector space over $\mathbb{F}_{p}$.
\item $\langle , \rangle_{n}$ denotes a (non-degenerate) inner product of $V_{n}$. When $V_{n}=\mathbb{F}_{p^n}$, let $\langle x, y \rangle_{n}=\mathrm{Tr}_{1}^{n}(xy)$, where $\mathrm{Tr}_{1}^{n}$ is the trace function from $\mathbb{F}_{p^n}$ to $\mathbb{F}_{p}$; when $V_{n}=\mathbb{F}_{p}^{n}$, let $\langle x, y \rangle_{n}=x \cdot y$, where $\cdot$ denotes the usual dot product.
\item For a function $F: V_{n} \rightarrow V_{m}$, let $D_{F, i}=\{x \in V_{n}: F(x)=i\}$, where $i \in V_{m}$.
\item For any set $D \subseteq V_{n}$ and $a \in V_{n}$, let $\chi_{a}(D)=\sum_{x \in D}\chi_{a}(x)$, where $\chi_{a}$ is the character defined by $\chi_{a}(x)=\zeta_{p}^{\langle a, x \rangle_{n}}$.
\item For a set $D \subseteq V_{n}$, let $D^{*}=\{x \in D: x \neq 0\}$, and
let $\delta_{D}$ be the indicator function. If $D=\{\alpha\}$, let $\delta_{\alpha}=\delta_{\{\alpha\}}$.
\item For integers $i, j$ with $i>j$, define $\binom{j}{i}=0$.
\end{itemize}
\subsection{Bent partitions}
The \emph{Walsh transform} of a $p$-ary function $f: V_{n} \rightarrow \mathbb{F}_{p}$ is defined by 
\begin{equation*}
	W_{f}(a)=\sum_{x \in V_{n}}\zeta_{p}^{f(x)-\langle a, x \rangle_{n}}, a \in V_{n}.
\end{equation*}
If $|W_{f}(a)|=p^{\frac{n}{2}}$ for each $a \in V_{n}$, then $f$ is called a \emph{bent} function. For a function $F: V_{n} \rightarrow V_{m}$, define $F_{c}: V_{n} \rightarrow \mathbb{F}_{p}$ ($c \in V_{m}^{*}$) as $F_{c}(x)=\langle c, F(x)\rangle_{m}$, which is called a \emph{component function} of $F$. If all component functions $F_{c}, c \in V_{m}^{*}$, are bent functions, then $F$ is called a \emph{vectorial bent} function. The Walsh transform of a bent function $f: V_{n} \rightarrow \mathbb{F}_{p}$ satisfies
 \begin{equation*}
 W_{f}(a)=\epsilon_{f, a} p^{\frac{n}{2}}\zeta_{p}^{f^{*}(a)}, a \in V_{n},
 \end{equation*}
where $\epsilon_{f, a}=1$ if $p=2$,  $\epsilon_{f, a}=\pm 1$ if $p^{n} \equiv 1 \pmod 4$, $\epsilon_{f, a}=\pm \sqrt{-1}$ if $p^{n} \equiv 3 \pmod 4$, and $f^{*}: V_{n} \rightarrow \mathbb{F}_{p}$ is called the \emph{dual} of $f$. If $\epsilon_{f, a}=\epsilon_{f}$ for all $a \in V_{n}$, where $\epsilon_{f} \in \{\pm 1, \pm \sqrt{-1}\}$, then $f$ is called \emph{weakly regular}. In particular, if $\epsilon_{f}=1$, then $f$ is called \emph{regular}.

Anbar and Meidl \cite{AM2022Be} introduced the notion of bent partitions as follows.

\begin{definition}
Let $n$ be an even positive integer, and $K$ be a positive integer with $p \mid K$. 
\begin{itemize}
\item Let $\Gamma=\{A_{i}, 1 \leq i \leq K\}$ be a partition of $V_{n}$. If each $p$-ary function $f: V_{n} \rightarrow \mathbb{F}_{p}$ for which any $j \in \mathbb{F}_{p}$ has exactly $\frac{K}{p}$ of sets $A_{i}$ in $\Gamma$ in its preimage set, is a bent function, then $\Gamma$ is called a bent partition of $V_{n}$ of depth $K$. Every such bent function $f$ is called a bent function generated by bent partition $\Gamma$. 
\item Let $\Omega=\{U, A_{i}, 1 \leq i\leq K\}$ be a partition of $V_{n}$, where $U$ is an $\frac{n}{2}$-dimensional affine subspace of $V_{n}$. Then $\Omega$ is called a normal bent partition of $V_{n}$ of depth $K$ if each $p$-ary function $f: V_{n} \rightarrow \mathbb{F}_{p}$ satisfying the following conditions is bent:

(1) Any $j \in \mathbb{F}_{p}$ has exactly $\frac{K}{p}$ of sets $A_{i}$ in $\Gamma$ in its preimage set.

(2) $f$ is constant on $U$.
\end{itemize}
\end{definition}
If $\Omega=\{U, A_{i}, 1 \leq i \leq K\}$ is a normal bent partition, then for arbitrary $1 \leq i_{0} \leq K$, $\Gamma=\{U \cup A_{i_{0}}, A_{i}, i \in \{1, 2, \dots, K\} \backslash \{i_{0}\}\}$ is a bent partition. We say that $\Gamma$ is a \emph{bent partition arising from the normal bent partition $\Omega$}.  For a (normal) bent partition $\Gamma$, if all generated bent functions $f$ are weakly regular, then we call $\Gamma$ a \emph{weakly regular (normal) bent partition} (simply called in class $\mathcal{WRBP}$). By \cite[Theorem 3]{AM2022Be} and \cite[Theorem 3]{WWF2026Fu}, the depth $K$ of a (normal) bent parition of $V_{n}$ in class $\mathcal{WRBP}$ satisfies $K=p^{m}$, where $m \leq \frac{n}{2}$. By \cite[Remark 1]{WWF2026Fu}, for a bent partition in class $\mathcal{WRBP}$, the generated bent functions $f$ are all regular or all weakly regular but not regular, that is, $\epsilon_{f}=\epsilon$, where $\epsilon \in \{\pm 1\}$ is a constant. By \cite[Corollary 1]{AM2022Be}, all generated bent functions by a normal bent partition in class $\mathcal{WRBP}$ are regular.
When $p=2$, each bent partition is in class $\mathcal{WRBP}$. When $p$ is odd, for a bent partition  $\Gamma=\{A_{i}, i \in V_{m}\}$ of $V_{n}$ in class $\mathcal{WRBP}$, by \cite[Theorem 3]{AM2022Be}, there is $k \in V_{m}$ such that
\begin{equation}\label{add}
	|A_{k}|=p^{n-m}+\epsilon p^{\frac{n}{2}-m} (p^m-1), |A_{i}|=p^{n-m}-\epsilon p^{\frac{n}{2}-m}, i \neq k.
\end{equation}

Suppose $U_{i}, 0 \leq i \leq t-1$, are $m$-dimensional linear subspaces of $V_{2m}$ satisfying $U_{i} \cap U_{j}=\{0\}$ for any $i \neq j$. Then $\{U_{i}, 0 \leq i \leq t-1\}$ is called a \emph{partial spread} of $V_{2m}$. If $t=p^{m}+1$, then $\{U_{i}, 0 \leq i \leq p^{m}\}$ is called a \emph{spread} of $V_{2m}$. For a spread $\{U_{i}, 0 \leq i \leq p^{m}\}$ of $V_{2m}$, let $A_{1}=(U_{0} \cup U_{1})-\alpha, A_{i}=(U_{i}-\alpha)\backslash \{-\alpha\}, 2 \leq i \leq p^m$, where $\alpha \in V_{n}$. Then $\{A_{i}, 1 \leq i \leq p^{m}\}$ is a bent partition of $V_{2m}$ in class $\mathcal{WRBP}$ whose depth achieves the maximal value.
\subsection{Linear codes}
For any $x=(x_{1}, \dots, x_{n}) \in \mathbb{F}_{p}^{n}$, let $supp(x)=\{i: x_{i} \neq 0, 1 \leq i \leq n\}$ be the support of $x$, and let  $wt(x)=|supp(x)|$ be the Hamming weight of $x$. For $x, y \in \mathbb{F}_{p}^{n}$, let $d(x, y)=wt(x-y)$ be the Hamming distance between $x$ and $y$. A $p$-ary $[n, k, d]$ \emph{linear code} $\mathcal{C}$ is a $k$-dimensional linear subspace of $\mathbb{F}_{p}^{n}$ with minimum Hamming distance $d$, where $d=\mathrm{min}\{d(x, y): x, y \in \mathcal{C}, x \neq y\}$. Let $M_{j}$ be the
number of codewords with Hamming weight $j$ in $\mathcal{C}$ for $1 \leq j \leq n$. Then $1+\sum_{j=1}^{n}M_{j}z^{j}$ is called the \emph{weight enumerator} of $\mathcal{C}$. If the number of nonzero $M_{j}, 1 \leq j \leq n$, is $t$, then the code $\mathcal{C}$ is called \emph{$t$-weight}.
\subsection{Hadamard matrices}
An $n \times n$ complex matrix $H$ is called a \emph{Buston Hadamard matrix} if every entry of $H$ is an $m$-th root of unity and $HH^{*}=nI_{n}$, where $H^{*}$ denotes the conjugate transpose of $H$, and $I_{n}$ is the $n \times n$ identity matrix. By \cite[Property 4]{KSW1985Ge}, $f: V_{n} \rightarrow \mathbb{F}_{p}$ is bent if and only if $H=[\zeta_{p}^{f(x-y)}]_{x, y \in V_{n}}$ is a Buston Hadamard matrix. If $f: V_{n} \rightarrow \mathbb{F}_{p}$ is weakly regular bent, then we say that the Buston Hadamard matrix $H$ is of \emph{weakly regular type}, and denote $\vartheta_{H}=\epsilon_{f}$.

\section{Characterization of bent partitions in class $\mathcal{WRBP}$}
In this section, we will characterize bent partitions in class $\mathcal{WRBP}$ via the duals of  component functions of vectorial bent functions, Buston Hadamard matrices and two classes of linear codes. Furthermore, we will study the correspondence between spreads and bent partitions in class $\mathcal{WRBP}$ whose depth achieves the maximal value.

The following theorem gives the characterization of bent partitions in class $\mathcal{WRBP}$ in terms of the duals of component functions of vectorial bent functions, which solves Question 2.

\begin{theorem} \label{Theorem 1}
Let $\Gamma=\{A_{i}, i \in V_{m}\}$ be a partition of $V_{n}$, where $n$ is even, $m \leq \frac{n}{2}$, and $m \geq 2$ if $p=2$. Define $F(x)=i, \text{ if } x \in A_{i}, i \in V_{m}$. Then the following statements are equivalent.

(i) $\Gamma$ is a bent partition in class $\mathcal{WRBP}$.

(ii) $F$ is a vectorial bent function for which all component functions $F_{c}, c \in V_{m}^{*}$, are weakly regular with $\epsilon_{F_{c}}=\epsilon$, where $\epsilon \in \{\pm 1\}$ with $\epsilon=1$ if $p \neq 3$, and there exist functions $G: V_{n} \rightarrow V_{m}$ and $h: V_{n} \rightarrow \mathbb{F}_{p}$ such that $(F_{c})^{*}=G_{c}+h$ for all $c \in V_{m}^{*}$.

(iii) $F$ is a vectorial bent function for which all component functions $F_{c}, c \in V_{m}^{*}$, are weakly regular with $\epsilon_{F_{c}}=\epsilon$, where $\epsilon \in \{\pm 1\}$ with $\epsilon=1$ if $p \neq 3$, and $(F_{d})^{*}+(F_{e})^{*}-(F_{d+e})^{*}, d \neq -e \in V_{m}^{*}$, are the same function.
\end{theorem}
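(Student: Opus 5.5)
The plan is to prove (ii)$\Rightarrow$(i), then (ii)$\Leftrightarrow$(iii), and finally (i)$\Rightarrow$(ii), which is where the real work lies. Throughout I use one identity. Every bent function generated by $\Gamma$ has the form $f=\varphi\circ F$ with $\varphi:V_{m}\rightarrow\mathbb{F}_{p}$ balanced. Fourier expanding $\zeta_{p}^{\varphi(i)}=p^{-m}\sum_{c\in V_m}\hat\varphi(c)\zeta_p^{\langle c,i\rangle_m}$, where $\hat\varphi(0)=0$ because $\varphi$ is balanced, gives
\begin{equation*}
W_{f}(a)=p^{-m}\sum_{c\in V_m^{*}}\hat\varphi(c)\,W_{F_c}(a).
\end{equation*}

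\textbf{(ii)$\Rightarrow$(i).} Substitute $W_{F_c}(a)=\epsilon p^{\frac n2}\zeta_p^{G_c(a)+h(a)}$. Re-inserting the vanishing $c=0$ term and inverting the Fourier transform gives
\begin{equation*}
W_f(a)=\epsilon p^{\frac n2}\zeta_p^{h(a)+\varphi(G(a))}.
\end{equation*}
Hence every generated $f$ is weakly regular bent with $\epsilon_f=\epsilon$, so $\Gamma$ is in class $\mathcal{WRBP}$. This is the sufficient condition of \cite{WWF2026Fu}.

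\textbf{(ii)$\Leftrightarrow$(iii).} The direction (ii)$\Rightarrow$(iii) is immediate: $(F_d)^*+(F_e)^*-(F_{d+e})^*=h$ for all admissible $d,e$. For the converse, call the common function $h$ and put $g_x(c)=(F_c)^*(x)-h(x)$ for $c\in V_m^*$ and $g_x(0)=0$. Then $g_x(d+e)=g_x(d)+g_x(e)$ whenever $d,e,d+e\neq 0$. I will check that this partial additivity forces $g_x$ to be linear on $V_m$:
\begin{itemize}
\item when $p$ is odd, induction gives $g_x(kc)=kg_x(c)$ for $1\le k\le p-1$, and then additivity extends to all pairs;
\item when $p=2$ and $m\ge2$, one uses triples $d,e,d+e$ of distinct nonzero vectors.
\end{itemize}
Riesz representation then gives $g_x(c)=\langle c,G(x)\rangle_m$ for a function $G$, which is (ii).

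\textbf{(i)$\Rightarrow$(ii).} Taking $\varphi=\langle c,\cdot\rangle_m$ shows each $F_c$ is generated. So $F$ is vectorial bent, and by \cite[Remark 1]{WWF2026Fu} all $F_c$ are weakly regular with a common $\epsilon$. Fix $a$ and set $u_c=\zeta_p^{(F_c)^*(a)}$ and $U(i)=p^{-m}\sum_{c\neq0}u_c\zeta_p^{-\langle c,i\rangle_m}$, so that $\sum_i U(i)=0$. The identity above becomes: for every balanced $\varphi$, $\sum_i\zeta_p^{\varphi(i)}U(i)$ is a $p$-th root of unity. Comparing $\varphi$ with the function obtained by swapping its values on two points $i,j$ gives
\begin{equation*}
(\zeta_p^{s}-\zeta_p^{t})(U(i)-U(j))=\zeta_p^{\alpha}-\zeta_p^{\beta}.
\end{equation*}
Exploiting this over many choices, I aim to show that $U$ is constant, say $\kappa$, off a single point $g=G(a)$, with $U(g)=\kappa+\zeta_p^{h(a)}$. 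Then $\sum_iU(i)=0$ forces $\kappa=-p^{-m}\zeta_p^{h(a)}$. Inverting the transform gives $u_c=\zeta_p^{h(a)+\langle c,G(a)\rangle_m}$, which is exactly $(F_c)^*=G_c+h$.

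\textbf{Main obstacle.} The delicate step is this rigidity argument for $U$, together with the claim $\epsilon=1$ when $p\neq3$. Nothing above uses the sign, so the sign must come from a vanishing-sums-of-roots-of-unity analysis. Concretely, a relation $-\zeta_p^{x}=\sum\zeta_p^{y_j}$ with few terms is possible essentially only through $1+\zeta_3+\zeta_3^2=0$, and the constraint can also be read off from the set sizes in (\ref{add}). I would first try to derive from the root-of-unity condition that the generated duals satisfy $f^*=\varphi\circ G+h$ exactly. I would then compare $W_f(0)=\sum_j|\varphi^{-1}(j)\text{-part}|\zeta_p^j$ computed via (\ref{add}) with $\epsilon p^{n/2}\zeta_p^{f^*(0)}$. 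The aim is that, for $p\ne3$, the minimal polynomial of $\zeta_p$ forces the coefficient pattern to be compatible only with $\epsilon=1$. The case $p=3$ survives because $-1=\zeta_3+\zeta_3^2$.
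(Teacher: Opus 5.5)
Your arguments for (ii)$\Rightarrow$(i) and (ii)$\Leftrightarrow$(iii) are correct, and more self-contained than the paper, which simply cites \cite{WWF2026Fu}. Your reduction in (i)$\Rightarrow$(ii) is also the right one. Your $U(i)$ equals $\epsilon^{-1}p^{-\frac{n}{2}}\bigl(\chi_{-a}(A_i)-p^{n-m}\delta_0(a)\bigr)$, so ``$U$ is constant off one point'' is exactly the paper's claim that the values $\chi_a(A_i)$ take one value $p^m-1$ times and another value once.

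\textbf{First gap: the rigidity step.} That claim is the entire content of the hard direction, and you only state it as an aim. Three things have to be proved.
(a) If $U(i)\neq U(j)$, then $U(i)-U(j)=\pm\zeta_p^{\tau}$. From your swap identity this needs an argument, e.g.\ comparing the swaps with $(s,t)=(0,1)$ and $(0,2)$, and using that an integral relation among $1,\zeta_p,\dots,\zeta_p^{p-1}$ must have constant coefficients.
(b) $U$ takes at most two values. For $p\ge 5$ this follows from $\pm\zeta_p^{x}\pm\zeta_p^{y}\pm\zeta_p^{z}\neq 0$. For $p=3$, however, three values whose differences are $\varepsilon\zeta_3^{x},\varepsilon\zeta_3^{x+1},\varepsilon\zeta_3^{x+2}$ are not excluded by this. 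The paper needs a separate double-swap argument, using $3^m\ge 9$, to rule that configuration out.
(c) The minority value occurs exactly once. This again uses a simultaneous swap of two pairs, which would make $2(\zeta_p^{s}-\zeta_p^{t})(U(i_1)-U(i_3))$ a difference of two $p$-th roots of unity, contradicting (a).

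\textbf{Second gap: the sign.} This is the more serious problem, because the route you sketch cannot succeed.
Substituting the sizes \eqref{add} into $W_f(0)=\sum_i|A_i|\zeta_p^{\varphi(i)}$ gives $\epsilon p^{\frac n2}\zeta_p^{\varphi(k)}$ identically, for either sign. So the comparison you propose is vacuous.
More generally, for $m<\frac n2$, nothing at a single point $a$ rules out $\epsilon=-1$. In particular, the hyperplane counts $|\{x\in A_i:\langle a,x\rangle_n=j\}|$ forced by the pointwise formula are nonnegative integers for both signs.
The obstruction is global, and the idea you are missing is the paper's:
\begin{itemize}
\item Apply $\sigma_w\in\mathrm{Gal}(\mathbb{Q}(\zeta_p)/\mathbb{Q})$ to $\chi_{-a}(A_i)=p^{n-m}\delta_0(a)+\epsilon p^{\frac n2-m}\zeta_p^{h(a)}(p^m\delta_{G(a)}(i)-1)$. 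This gives $G(wa)=G(a)$ and $h(wa)=wh(a)$.
\item Hence, for $i\neq G(0)$, the set $D_{G,i}$ is a union of punctured lines, so $(p-1)\mid |D_{G,i}|$.
\item On the other hand, Parseval, $\sum_a|\chi_{-a}(A_i)|^2=p^n|A_i|$, yields $|D_{G,i}|=p^{n-m}-\epsilon p^{\frac n2-m}\equiv 1-\epsilon \pmod{p-1}$.
\item Therefore $\epsilon=-1$ forces $p-1\mid 2$.
\end{itemize}
This divisibility, not the relation $1+\zeta_3+\zeta_3^2=0$, is why $p=3$ is the exceptional case.
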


\begin{proof}
By \cite[Theorem 4]{WWF2026Fu}, $(ii) \Rightarrow (i)$ holds. By the proof of \cite[Proposition 2]{WWF2026Fu}, $(ii) \Leftrightarrow (iii)$ holds. We only need to show that $(i) \Rightarrow (ii)$.

$(i) \Rightarrow (ii)$: We first show that $F$ is a vectorial bent function whose component functions are all regular or all weakly regular but not regular, and there exist functions $G: V_{n} \rightarrow V_{m}$ and $h: V_{n} \rightarrow \mathbb{F}_{p}$ such that $(F_{c})^{*}=G_{c}+h$ for all $c \in V_{m}^{*}$. By \cite[Theorem 2]{WWF2026Fu} and \cite[Proposition 2]{WFWY2024A}, we only need to consider the case of odd $p$ and $m \geq 2$. In the following, let $p$ be odd and $m \geq 2$, and let $a$ be an arbitrary fixed nonzero element in $V_{n}$. Define the equivalence relationship $R_{a}$ in $\Gamma$ by $A_{i} R_{a} A_{j}$ if and only if $\chi_{a}(A_{i})=\chi_{a}(A_{j})$. Let $\{K_{1}^{(a)}, \dots, K_{s}^{(a)}\}$ be the corresponding equivalence classes. We claim that $s=2$. To the contrary, assume that $s \geq 3$. Let $f$ be an arbitrary $p$-ary bent function generated by $\Gamma$. Then $\{D_{f, i}, i \in \mathbb{F}_{p}\}$ is a bent partition in class $\mathcal{WRBP}$. For any $j \in \mathbb{F}_{p}$, by \cite[Theorem 2]{WWF2026Fu} and its proof,
\begin{equation}\label{th11}
\chi_{a}(D_{f, j})=\begin{cases}
\epsilon p^{\frac{n}{2}-1}(p-1)\zeta_{p}^{h_{f}(-a)}, & \text{ if } g_{f}(-a)=j,\\
-\epsilon p^{\frac{n}{2}-1}\zeta_{p}^{h_{f}(-a)}, & \text{ if } g_{f}(-a) \neq j,
\end{cases}
\end{equation}
where $g_{f}(x)=\frac{f^{*}(x)+f^{*}(-x)}{2}$, $h_{f}(x)=\frac{f^{*}(x)-f^{*}(-x)}{2}$, and $\epsilon \in \{\pm 1\}$ is a constant. By Eq. \eqref{th11}, we have $s \geq 2$. Let $A_{i_{1}}, A_{i_{2}}$ be arbitrary sets with $\chi_{a}(A_{i_{1}}) \neq \chi_{a}(A_{i_{2}})$. Let $f'$ be a $p$-ary bent function generated by $\Gamma$ with $A_{i_{1}} \subseteq D_{f', 0}, A_{i_{2}} \subseteq D_{f', 1}$. Let $f''$ be the $p$-ary bent function generated by $\Gamma$ for which $D_{f'', 0}=A_{i_{2}} \cup D_{f', 0} \backslash A_{i_{1}}, D_{f'', 1}=A_{i_{1}} \cup D_{f', 1} \backslash A_{i_{2}}, D_{f'', i}=D_{f', i}, i \in \mathbb{F}_{p} \backslash \{0, 1\}$. By Eq. \eqref{th11} and $\chi_{a}(D_{f', i})=\chi_{a}(D_{f'', i}), i \in \mathbb{F}_{p} \backslash \{0, 1\}$, we have $h_{f'}(-a)=h_{f''}(-a)$ (otherwise, $(p-1)\zeta_{p}^{h_{f'}(-a)}+\zeta_{p}^{h_{f''}(-a)}=0$ or $(p-1)\zeta_{p}^{h_{f''}(-a)}+\zeta_{p}^{h_{f'}(-a)}=0$, which is impossible). Since $0 \neq \chi_{a}(A_{i_{1}})-\chi_{a}(A_{i_{2}})=\chi_{a}(D_{f', 0})-\chi_{a}(D_{f'', 0})$, then by $h_{f'}(-a)=h_{f''}(-a)$ and Eq. \eqref{th11}, we have  $\chi_{a}(A_{i_{1}})-\chi_{a}(A_{i_{2}})=\pm p^{\frac{n}{2}}\zeta_{p}^{h_{f'}(-a)}$. By the above arguments, if $A_{i_{1}}$ and $A_{i_{2}}$ are in different equivalence classes, then
\begin{equation}\label{th12}
\chi_{a}(A_{i_{1}})-\chi_{a}(A_{i_{2}})=\varepsilon_{a, i_{1}, i_{2}} p^{\frac{n}{2}}\zeta_{p}^{\tau_{a, i_{1}, i_{2}}},
\end{equation}
where $\varepsilon_{a, i_{1}, i_{2}} \in \{\pm 1\}$ and $\tau_{a, i_{1}, i_{2}} \in \mathbb{F}_{p}$. Let $A_{i_{1}}, A_{i_{2}}, A_{i_{3}}$ be arbitrary sets from different equivalence classes. Then
\begin{align*}
&\chi_{a}(A_{i_{1}})-\chi_{a}(A_{i_{2}})=\varepsilon_{a, i_{1}, i_{2}} p^{\frac{n}{2}}\zeta_{p}^{\tau_{a, i_{1}, i_{2}}},\\ &\chi_{a}(A_{i_{3}})-\chi_{a}(A_{i_{1}})=\varepsilon_{a, i_{3}, i_{1}} p^{\frac{n}{2}}\zeta_{p}^{\tau_{a, i_{3}, i_{1}}},\\ &\chi_{a}(A_{i_{2}})-\chi_{a}(A_{i_{3}})=\varepsilon_{a, i_{2}, i_{3}} p^{\frac{n}{2}}\zeta_{p}^{\tau_{a, i_{2}, i_{3}}},
\end{align*}
where $\varepsilon_{a, i_{1}, i_{2}}, \varepsilon_{a, i_{3}, i_{1}}, \varepsilon_{a, i_{2}, i_{3}} \in \{\pm 1\}$ and $\tau_{a, i_{1}, i_{2}}$, $\tau_{a, i_{3}, i_{1}}, \tau_{a, i_{2}, i_{3}} \in \mathbb{F}_{p}$. From which,
\begin{equation}\label{th13}
	\varepsilon_{a, i_{1}, i_{2}}\zeta_{p}^{\tau_{a}, i_{1}, i_{2}}+\varepsilon_{a, i_{2}, i_{3}}\zeta_{p}^{\tau_{a}, i_{2}, i_{3}}+\varepsilon_{a, i_{3}, i_{1}}\zeta_{p}^{\tau_{a}, i_{3}, i_{1}}=0.
\end{equation}
\begin{itemize}
	\item When $p \geq 5$, Eq. \eqref{th13} contradicts the fact that $1+x+x^{2}+\dots+x^{p-1}$ is the minimal polynomial of $\zeta_{p}$ over $\mathbb{Q}$. Thus, $s=2$ when $p \geq 5$.
	\item When $p=3$, by Eq. \eqref{th13} and the fact that $1+x+x^{2}$ is the minimal polynomial of $\zeta_{3}$ over $\mathbb{Q}$, we have
	\begin{equation*}
	\varepsilon_{a, i_{1}, i_{2}}=\varepsilon_{a, i_{2}, i_{3}}=\varepsilon_{a, i_{3}, i_{1}}, \{\tau_{a, i_{1}, i_{2}}, \tau_{a, i_{2}, i_{3}}, \tau_{a, i_{3}, i_{1}}\}=\mathbb{F}_{3}.
	\end{equation*}
If $s \geq 4$, let $A_{i_{1}}, A_{i_{2}}, A_{i_{3}}, A_{i_{4}}$  be in different equivalence classes. Then $\varepsilon_{a, i_{1}, i_{2}}=\varepsilon_{a, i_{2}, i_{4}}=\varepsilon_{a, i_{4}, i_{1}}(=\varepsilon_{a, i_{3}, i_{1}})$ and $\{\tau_{a, i_{1}, i_{2}}, \tau_{a, i_{2}, i_{4}}, \tau_{a, i_{4}, i_{1}}\}=\mathbb{F}_{3}$. As $\chi_{a}(A_{i_{3}}) \neq \chi_{a}(A_{i_{4}})$, then $\tau_{a, i_{4}, i_{1}}=\tau_{a, i_{2}, i_{3}}, \tau_{a, i_{2}, i_{4}}=\tau_{a, i_{3}, i_{1}}$. Thus,
\begin{equation*}
\chi_{a}(A_{i_{4}})-\chi_{a}(A_{i_{3}})=
	\varepsilon_{a, i_{4}, i_{1}}3^{\frac{n}{2}}
	\zeta_{3}^{\tau_{a, i_{4}, i_{1}}}-\varepsilon_{a, i_{3}, i_{1}}3^{\frac{n}{2}}\zeta_{3}^{\tau_{a, i_{3}, i_{1}}}=\varepsilon_{a, i_{4}, i_{1}}3^{\frac{n}{2}}
	(\zeta_{3}^{\tau_{a, i_{4}, i_{1}}}-\zeta_{3}^{\tau_{a, i_{2}, i_{4}}}),
\end{equation*}
 which contradicts Eq. \eqref{th12}. Hence, $s=3$. W.l.o.g., suppose $|K^{(a)}_{1}| \geq |K^{(a)}_{2}| \geq |K^{(a)}_{3}|$ (Note that $|K^{(a)}_{1}| \geq 3$ as $3^m \geq 9$) and $A_{i_{0}}, A_{i_{1}} \in K^{(a)}_{1}, A_{i_{2}} \in K^{(a)}_{2}, A_{i_{3}} \in K^{(a)}_{3}$. Let $f'$ be a ternary bent function generated by $\Gamma$ with $A_{i_{0}} \cup A_{i_{1}} \subseteq D_{f', 0}$ and $A_{i_{2}} \cup A_{i_{3}} \subseteq D_{f', 1}$. Let $f''$ be the ternary bent function for which $D_{f'', 0}=A_{i_{2}} \cup A_{i_{3}} \cup D_{f', 0} \backslash (A_{i_{0}} \cup A_{i_{1}}), D_{f'', 1}=A_{i_{0}} \cup A_{i_{1}} \cup D_{f', 1} \backslash (A_{i_{2}} \cup A_{i_{3}})$ and $D_{f'', 2}=D_{f', 2}$. By Eq. \eqref{th11} and $\chi_{a}(D_{f', 2})=\chi_{a}(D_{f'', 2})$, we have  $h_{f'}(-a)=h_{f''}(-a)$ and
 \begin{equation*}
 	\begin{split}
 0 \text{ or } \pm 3^{\frac{n}{2}}\zeta_{3}^{h_{f'}(-a)}=\chi_{a}(D_{f', 0})-\chi_{a}(D_{f'', 0})&=(\chi_{a}(A_{i_{1}})-\chi_{a}(A_{i_{2}}))-(\chi_{a}(A_{i_{3}})-\chi_{a}(A_{i_{1}}))\\
 &=\varepsilon_{a, i_{1}, i_{2}}3^{\frac{n}{2}}\zeta_{3}^{\tau_{a, i_{1}, i_{2}}}-\varepsilon_{a, i_{3}, i_{1}}3^{\frac{n}{2}}\zeta_{3}^{\tau_{a, i_{3}, i_{1}}},
 \end{split}
\end{equation*}
which is impossible since $\varepsilon_{a, i_{1}, i_{2}}=\varepsilon_{a, i_{3}, i_{1}}$ and $\tau_{a, i_{1}, i_{2}} \neq \tau_{a, i_{3}, i_{1}}$. Therefore, $s=2$ when $p=3$.
\end{itemize}
W.l.o.g., suppose $|K^{(a)}_{1}| \geq |K^{(a)}_{2}|$. We show that $|K^{(a)}_{1}|=p^m-1$ and $|K^{(a)}_{2}|=1$. To the contrary, assume that $|K^{(a)}_{2}| \geq 2$. Let $A_{i_{1}}, A_{i_{2}} \in K^{(a)}_{1}, A_{i_{3}}, A_{i_{4}} \in K^{(a)}_{2}$. Let $f'$ be a $p$-ary bent function generated by $\Gamma$ with $A_{i_{1}} \cup A_{i_{2}} \subseteq D_{f', 0}, A_{i_{3}} \cup A_{i_{4}} \subseteq D_{f', 1}$. Let $f''$ be the $p$-ary bent function for which $D_{f'', 0}=A_{i_{3}} \cup A_{i_{4}} \cup D_{f', 0} \backslash (A_{i_{1}} \cup A_{i_{2}}), D_{f'', 1}=A_{i_{1}} \cup A_{i_{2}} \cup D_{f', 1} \backslash (A_{i_{3}} \cup A_{i_{4}})$ and $D_{f'', i}=D_{f', i}, i \in \mathbb{F}_{p} \backslash \{0, 1\}$. By Eq. \eqref{th11} and $\chi_{a}(D_{f', i})=\chi_{a}(D_{f'', i})$ for any $i \in \mathbb{F}_{p} \backslash \{0, 1\}$, we have  $h_{f'}(-a)=h_{f''}(-a)$ and
\begin{equation*}
 0 \text{ or } \pm p^{\frac{n}{2}}\zeta_{p}^{h_{f'}(-a)}=\chi_{a}(D_{f', 0})-\chi_{a}(D_{f'', 0})=2(\chi_{a}(A_{i_{1}})-\chi_{a}(A_{i_{3}})),
 \end{equation*}
 which contradicts $A_{i_{1}} \in K^{(a)}_{1}, A_{i_{3}} \in K^{(a)}_{2}$ and Eq. \eqref{th12}.
By the above arguments, we have proven that for arbitrary fixed $a \in V_{n}^{*}$, $\chi_{a}(A_{i}), i \in V_{m}$, take two values $X$ and $X+\varepsilon_{a} p^{\frac{n}{2}}\zeta_{p}^{\tau_{a}}$, where the frequency of $X$ and $X+\varepsilon_{a} p^{\frac{n}{2}}\zeta_{p}^{\tau_{a}}$ are $p^m-1$ and $1$ respectively, $\varepsilon_{a} \in \{\pm 1\}$ and $\tau_{a} \in \mathbb{F}_{p}$. Let $i_{a} \in V_{m}$ be with $\chi_{a}(A_{i_{a}})=X+\varepsilon_{a}p^{\frac{n}{2}}\zeta_{p}^{\tau_{a}}$.
Since $\sum_{i \in V_{m}}\chi_{a}(A_{i})=\chi_{a}(V_{n})=0$, we get $X=-\varepsilon_{a}p^{\frac{n}{2}-m}\zeta_{p}^{\tau_{a}}$. Since $\Gamma$ is a bent partition in class $\mathcal{WRBP}$, for any balanced function $B$ from $V_{m}$ to $\mathbb{F}_{p}$ (that is, $|D_{B, i}|=p^{m-1}$ for all $i \in \mathbb{F}_{p}$), we have that $B(F(x))$ is bent and
\begin{equation*}
\epsilon p^{\frac{n}{2}}\zeta_{p}^{(B(F))^{*}(-a)}=W_{B(F)}(-a)=\sum_{x \in V_{n}}\zeta_{p}^{B(F(x))+\langle a, x \rangle_{n}}
=\sum_{i \in V_{m}}\chi_{a}(A_{i})\zeta_{p}^{B(i)}=\varepsilon_{a}p^{\frac{n}{2}}\zeta_{p}^{\tau_{a}+B(i_{a})},
\end{equation*}
which implies that
\begin{equation*}
\varepsilon_{a}=\epsilon,  (B(F))^{*}(-a)=\tau_{a}+B(i_{a}).
\end{equation*}
Define $h: V_{n} \rightarrow \mathbb{F}_{p}$ as
\begin{equation*}
h(0)=0, h(x)=\tau_{-x}, x \in V_{n}^{*}.
\end{equation*}
 Define $G: V_{n} \rightarrow V_{m}$ as
\begin{equation*}
G(0)=i_{0}, G(x)=i_{-x}, x \in V_{n}^{*},
\end{equation*}
where $i_{0} \in V_{m}$ is given by $|A_{i_{0}}| =p^{n-m}+\epsilon p^{\frac{n}{2}-m}(p^m-1)$. Note that $|A_{i}|=p^{n-m}-\epsilon p^{\frac{n}{2}-m}$ for $i \neq i_{0}$. Then for any balanced function $B$ from $V_{m}$ to $\mathbb{F}_{p}$, we have $(B(F))^{*}(x)=B(G(x))+h(x), x \in V_{n}$. In particular, when $B(x)=\langle c, x \rangle_{m}$, where $c \in V_{m}^{*}$, the result holds.

Next, we prove that $\epsilon=1$ if $p \neq 3$. Obviously, $\epsilon=1$ if $p=2$. We only need to consider the case of odd $p$. By the above arguments, there exist functions $G: V_{n} \rightarrow V_{m}$ and $h: V_{n} \rightarrow \mathbb{F}_{p}$ with $h(0)=0$ such that for any $a \in V_{n}$ and $i \in V_{m}$,
\begin{equation*}
\chi_{-a}(A_{i})=p^{n-m}\delta_{0}(a)+\epsilon p^{\frac{n}{2}-m}\zeta_{p}^{h(a)}(p^m \delta_{G(a)}(i)-1).
\end{equation*}
For each $w \in \mathbb{F}_{p}^{*}$, let $\sigma_{w}$ be the automorphism of cyclotomic field $\mathbb{Q}(\zeta_{p})$ defined by $\sigma_{w}(\zeta_{p})=\zeta_{p}^{w}$.
Then for any $a \in V_{n}$ and $i \in V_{m}$,
\begin{equation*}
\begin{split}
p^{n-m}\delta_{0}(a)+\epsilon p^{\frac{n}{2}-m}\zeta_{p}^{wh(a)}(p^{m}\delta_{G(a)}(i)-1)&=\sigma_{w}(\chi_{-a}(A_{i}))\\
&=\chi_{-wa}(A_{i})\\
&=p^{n-m}\delta_{0}(a)+\epsilon p^{\frac{n}{2}-m}\zeta_{p}^{h(wa)}(p^{m}\delta_{G(wa)}(i)-1),
\end{split}
\end{equation*}
which implies that
\begin{equation}\label{th14}
G(wa)=G(a), h(wa)=wh(a), w \in \mathbb{F}_{p}^{*}, a \in V_{n}.
\end{equation}
Let $i \in V_{m}$ be fixed. Then we have
\begin{equation}\label{th15}
	\begin{split}
		&\sum_{a \in V_{n}}|\chi_{-a}(A_{i})|^{2}\\
		&=\big(p^{n-m}+\epsilon p^{\frac{n}{2}-m}(p^{m}\delta_{G(0)}(i)-1)\big)^{2}+\sum_{\substack{a \in V_{n}^{*}\\G(a)=i}}(p^m-1)^{2}p^{n-2m}+\sum_{\substack{a \in V_{n}^{*}\\G(a) \neq i}}p^{n-2m}\\
		&=\big(p^{n-m}+\epsilon p^{\frac{n}{2}-m}(p^{m}\delta_{G(0)}(i)-1)\big)^{2}+p^{n-2m}(p^n-1)+p^{n-m}(p^m-2)|D_{G, i}|\\
		& \ \ \ -p^{n-m}(p^m-2)\delta_{G(0)}(i)\\
		&=2p^{n-m}\big(p^{n-m}-\epsilon p^{\frac{n}{2}-m}+\epsilon p^{\frac{n}{2}}\delta_{G(0)}(i)\big)+p^{n-m}(p^m-2)|D_{G, i}|.
	\end{split}
\end{equation}
On the other hand,
\begin{equation}\label{th16}
	\begin{split}
		\sum_{a \in V_{n}}|\chi_{-a}(A_{i})|^{2}
		&=\sum_{a \in V_{n}}\sum_{\alpha \in A_{i}}\zeta_{p}^{\langle-a, \alpha\rangle_{n}} \sum_{\alpha' \in A_{i}}\zeta_{p}^{\langle a, \alpha'\rangle_{n}}
		=\sum_{\alpha, \alpha' \in A_{i}}\sum_{a \in V_{n}}\zeta_{p}^{\langle a, \alpha'-\alpha\rangle_{n}}
		=p^{n}|A_{i}|\\
		&=p^{n}\big(p^{n-m}-\epsilon p^{\frac{n}{2}-m}+\epsilon p^{\frac{n}{2}}\delta_{G(0)}(i)\big).
	\end{split}
\end{equation}
Combining Eq. \eqref{th15} and Eq. \eqref{th16}, we obtain
\begin{equation}\label{co5}
	|D_{G, i}|=p^{n-m}+\epsilon p^{\frac{n}{2}-m}(p^{m}\delta_{G(0)}(i)-1).
\end{equation}
By Eq. \eqref{th14}, for any $i \in V_{m} \backslash \{G(0)\}$, $(p-1) \mid |D_{G, i}|$. Hence, $\epsilon=1$ if $p \neq 3$.
\end{proof}

\begin{remark}\label{Remark 1}
Let $p=2$, $n \geq 4$ be an even integer and $2 \leq m \leq \frac{n}{2}$. Suppose $\{A_{i}, i \in V_{m}\}$ is a bent partition of $V_{n}$. By \cite[Proposition 2]{WFWY2024A} and its proof, 
	there exist functions $G: V_{n} \rightarrow V_{m}$ and $h: V_{n} \rightarrow \mathbb{F}_{2}$ such that
	\begin{equation}\label{R1}
		\chi_{a}(A_{i})=2^{n-m}\delta_{0}(a)+2^{\frac{n}{2}-m}(-1)^{h(a)}(2^m \delta_{G(a)}(i)-1), a \in V_{n}, i \in V_{m}.
	\end{equation}
	With the same arguments as for odd $p$ in the proof of Theorem \ref{Theorem 1}, if  $|A_{i_{0}}|=2^{n-m}-2^{\frac{n}{2}-m}+2^{\frac{n}{2}}, |A_{i}|=2^{n-m}-2^{\frac{n}{2}-m}, i \neq i_{0}, $ for some $i_{0} \in V_{m}$, (that is, $G(0)=i_{0}, h(0)=0$), then we can obtain $|D_{G, i}|=2^{n-m}+2^{\frac{n}{2}-m}(2^{m}\delta_{G(0)}(i)-1)$; if $|A_{i_{0}}|=2^{n-m}+2^{\frac{n}{2}-m}-2^{\frac{n}{2}}, |A_{i}|=2^{n-m}+2^{\frac{n}{2}-m}, i \neq i_{0}, $ for some $i_{0} \in V_{m}$ (that is, $G(0)=i_{0}, h(0)=1$), then we can obtain $|D_{G, i}|=2^{n-m}-2^{\frac{n}{2}-m}(2^{m}\delta_{G(0)}(i)-1)$. 
\end{remark}

In the following proposition, we shows that when $p=2, n \geq 6$ and $m=\frac{n}{2}$, $h(0)=0$.

\begin{proposition}\label{Proposition 1}
Let $p=2$ and $n \geq 6$ be an even integer. Suppose $\{A_{i}, i \in V_{n/2}\}$ is a bent partition of $V_{n}$. Then the function $h: V_{n} \rightarrow \mathbb{F}_{2}$ in Theorem \ref{Theorem 1} (ii) satisfies $h(0)=0$, that is, $|A_{G(0)}|=2^{\frac{n}{2}+1}-1, |A_{i}|=2^{\frac{n}{2}}-1, i \neq G(0)$, where $G: V_{n} \rightarrow V_{n/2}$ is given in Theorem \ref{Theorem 1} (ii).
\end{proposition}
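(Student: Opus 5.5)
The plan is to argue by contradiction using Remark~\ref{Remark 1} with $m=\frac{n}{2}$. Suppose $h(0)=1$. Then Remark~\ref{Remark 1} gives $|A_{i_0}|=2^{m}+2^{0}-2^{m}=1$ and $|A_i|=2^{m}+1$ for $i\neq i_0$, where $i_0=G(0)$. It also gives $|D_{G,i_0}|=1$, i.e.\ $D_{G,i_0}=\{0\}$, and $|D_{G,i}|=2^{m}+1$ for $i\neq i_0$. Write $A_{i_0}=\{\alpha\}$. The goal is to show that the Fourier data forced by Eq.~\eqref{R1} is incompatible with $n\ge 6$.

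\emph{Step 1 (normalize the shift).} For $a\neq 0$ we have $G(a)\neq i_0$, so Eq.~\eqref{R1} gives $(-1)^{\langle a,\alpha\rangle_n}=\chi_a(A_{i_0})=-(-1)^{h(a)}$. Hence $h(a)=\langle a,\alpha\rangle_n+1$ for all $a\neq0$. Translating every block by $-\alpha$ multiplies each $\chi_a(A_i)$ by $(-1)^{\langle a,\alpha\rangle_n}$ and leaves $G$ unchanged. So after the shift we may assume $\alpha=0$, $A_{i_0}=\{0\}$, and $h(a)=1$ for all $a\neq 0$. Then for $i\neq i_0$ and $a\neq0$ we get $\chi_a(A_i)=1-2^{m}\delta_{G(a)}(i)$, while $\chi_0(A_i)=2^m+1$.

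\emph{Step 2 (pass to the dual set).} Fix $i\neq i_0$ and put $E=D_{G,i}$, so $|E|=2^m+1$ and $0\notin E$. The Fourier transform of $\delta_{A_i}-\delta_0$ equals $2^m$ at $0$, $-2^m$ on $E$, and $0$ elsewhere. Fourier inversion then gives
\begin{equation*}
\delta_{A_i}(x)-\delta_0(x)=2^{-m}\bigl(1-\chi_x(E)\bigr),\quad x\in V_n .
\end{equation*}
Consequently, for $x\neq0$ we have $\chi_x(E)\in\{1,\,1-2^m\}$, and the value $1-2^m$ occurs exactly for $x\in A_i$, i.e.\ $2^m+1$ times. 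Also $\chi_0(E)=2^m+1$, and the value $1$ occurs $2^{n}-2^{m}-2$ times.

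\emph{Step 3 (third moment; the key step).} The second moment $\sum_x\chi_x(E)^2=2^n|E|$ turns out to be automatically consistent, so I would use the third moment:
\begin{equation*}
\sum_{x\in V_n}\chi_x(E)^3=2^n\,\#\{(a,b,c)\in E^3: a+b+c=0\}\ \ge 0 .
\end{equation*}
Put $t=2^m$ and evaluate the left side from the value distribution in Step 2. A routine computation gives $(t+1)^3+(t+1)(1-t)^3+(t^2-t-2)=-t^4+3t^3+4t^2$. Dividing by $2^n=t^2$, the number of triples equals $-(t-4)(t+1)$. For $n\ge6$ we have $t\ge 8$, so this count is negative, which is a contradiction. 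Therefore $h(0)=0$. The stated block sizes $|A_{G(0)}|=2^{\frac n2+1}-1$ and $|A_i|=2^{\frac n2}-1$ then follow from Remark~\ref{Remark 1}.

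As a consistency check, for $n=4$ the count is $0$ and no contradiction arises, which matches the excluded case $(p,n)=(2,4)$ with a block of size $1$. The main obstacle is finding a moment that detects the contradiction; the second moment is vacuous here, so the third-moment triple count is the intended tool. Care is also needed in Step 1 to justify that the shift preserves Eq.~\eqref{R1} with the same $G$.
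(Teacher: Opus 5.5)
Your proposal is correct and follows essentially the paper's proof: assume $h(0)=1$, use $A_{G(0)}=\{\alpha\}$ to get $h(x)=\langle \alpha, x\rangle_{n}+1$, shift by $\alpha$, and contradict the nonnegativity of a third character moment, which evaluates to $2^{n}(2^{\frac{n}{2}}+1)(4-2^{\frac{n}{2}})<0$. The only difference is that you take the third moment over the dual set $E=D_{G,i}$ after Fourier inversion, while the paper takes it directly over $\chi_{a}(A_{i}+\alpha)$; both have the same value distribution and hence give the same sum, so your Step 2 is valid but unnecessary.
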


\begin{proof}
To the contrary, assume that $h(0)=1$. By Remark \ref{Remark 1}, $|A_{G(0)}|=1, |A_{i}|=2^{\frac{n}{2}}+1, i \neq G(0)$, and $D_{G, G(0)}=\{0\}, |D_{G, i}|=2^{\frac{n}{2}}+1, i \neq G(0)$. Denote $A_{G(0)}=\{\alpha\}$. Since $D_{G, G(0)}=\{0\}$, by Eq. \eqref{R1}, for any $a \in V_{n}^{*}$, we have $(-1)^{\langle \alpha, a\rangle_{n}}=\chi_{a}(A_{G(0)})=(-1)^{h(a)+1}$. Then $h(x)=\langle \alpha, x \rangle_{n}+1, x \in V_{n}$. For any $i \neq G(0)$, by $h(x)=\langle \alpha, x \rangle_{n}+1, |A_{i}|=|D_{G, i}^{*}|=2^{\frac{n}{2}}+1$ and Eq. \eqref{R1}, we have $\chi_{a}(A_{i}+\alpha)=(-1)^{\langle\alpha, a\rangle_{n}}\chi_{a}(A_{i})=1-2^\frac{n}{2}\delta_{G(a)}(i)$ for any $a \in V_{n}^{*}$, and 
\begin{equation*}
\sum_{a \in V_{n}}\big(\chi_{a}(A_{i}+\alpha)\big)^{3}=(2^\frac{n}{2}+1)^{3}+(2^{\frac{n}{2}}+1)(1-2^{\frac{n}{2}})^{3}+2^{n}-1-(2^{\frac{n}{2}}+1)=2^{n}(2^{\frac{n}{2}}+1)(4-2^{\frac{n}{2}}) <0,
\end{equation*}
which contradicts the fact that
\begin{equation*}
	\begin{split}
	\sum_{a \in V_{n}}\big(\chi_{a}(A_{i}+\alpha)\big)^{3}&=\sum_{a \in V_{n}}\sum_{x \in A_{i}+\alpha}(-1)^{\langle a, x \rangle_{n}}\sum_{y \in A_{i}+\alpha}(-1)^{\langle a, y \rangle_{n}}\sum_{z \in A_{i}+\alpha}(-1)^{\langle a, z \rangle_{n}}\\
	&=\sum_{x, y, z \in A_{i}+\alpha}\sum_{a \in V_{n}}(-1)^{\langle a, x+y+z\rangle_{n}}\\
	&=2^{n}|\{(x, y): x, y, x+y \in A_{i}+\alpha\}|\geq 0.
	\end{split}
\end{equation*}
Thus, $h(0)=0$.
\end{proof}

\begin{remark}
When $p=2, n=4, m=2$, by \cite[Proposition 4]{AFKMWW2026A}, the preimage set partition $\{D_{F, i}, i \in V_{2}\}$ of any vectorial bent function $F: V_{4} \rightarrow V_{2}$ is a bent partition. There exist vectorial bent functions $F$ such that the function $h: V_{n} \rightarrow \mathbb{F}_{2}$ in Theorem \ref{Theorem 1} (ii) satisfies $h(0)=1$, that is, $|D_{F, i_{0}}|=1$ for some $i_{0} \in V_{2}$. For example, by \cite[Theorem 6.32]{LN1997Fi}, the vectorial bent function $F: \mathbb{F}_{2^2}^{2} \rightarrow \mathbb{F}_{2^2}$ defined by
 $F(x_{1}, x_{2})=x_{1}x_{2}+x_{1}^{2}+ax_{2}^{2}, x_{1}, x_{2} \in \mathbb{F}_{2^2}$ satisfies $|D_{F, 0}|=1$, where $a \in \mathbb{F}_{2^2}$ with $\mathrm{Tr}_{1}^{2}(a)=1$.
\end{remark}

The following characterization of bent partitions in class $\mathcal{WRBP}$ via Buston Hadamard matrices directly follows from Theorem \ref{Theorem 1} and \cite[Proposition 2]{WWF2026Fu}.

\begin{corollary}\label{Corollary 1}
Let $\Gamma=\{A_{i}, i \in V_{m}\}$ be a partition of $V_{n}$, where $n$ is even, $m \leq \frac{n}{2}$, and $m \geq 2$ if $p=2$. Let $F: V_{n} \rightarrow V_{m}$ be given by $F(x)=i, x \in A_{i}, i \in V_{m}$. Define
\begin{equation*}
H_{c}=\left[\zeta_{p}^{\langle c, F(x-y)\rangle_{m}}\right]_{x, y \in V_{n}}, c \in V_{m}^{*}.
\end{equation*}
Then the following statements are equivalent.

(i) $\Gamma$ is a bent partition in class $\mathcal{WRBP}$.

(ii) $H_{c}, c \in V_{m}^{*}$, are Buston Hadamard matrices of weakly regular type with $\vartheta_{H_{c}}=\epsilon$ and $H_{d}H_{e}^{*}H_{d-e}^{*}, d \neq e \in V_{m}^{*}$, are all the same, where $\epsilon \in \{\pm 1 \}$ with $\epsilon=1$ if $p \neq 3$.
\end{corollary}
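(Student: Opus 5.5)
The plan is to reduce Corollary \ref{Corollary 1} to Theorem \ref{Theorem 1}: show that condition (ii) of the corollary is a matrix reformulation of condition (iii) of Theorem \ref{Theorem 1}.

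\emph{Bentness and weak regularity.} By \cite[Property 4]{KSW1985Ge}, $H_{c}$ is a Buston Hadamard matrix if and only if $F_{c}$ is bent. By the definition of weakly regular type, $\vartheta_{H_{c}}=\epsilon$ means exactly that $F_{c}$ is weakly regular with $\epsilon_{F_{c}}=\epsilon$. So the conditions on $H_{c}$ in (ii) match the requirement in Theorem \ref{Theorem 1} (iii) that $F$ is vectorial bent with all components weakly regular of sign $\epsilon$, with $\epsilon=1$ when $p\neq 3$.

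\emph{Diagonalizing the matrices.} Each $H_{c}$ is a group-type matrix on $V_{n}$, so the characters $\chi_{a}$ diagonalize all of them simultaneously. Let $\Phi=[\chi_{a}(x)]$. Then $H_{c}=\Phi^{-1}D_{c}\Phi$, where $D_{c}$ is diagonal with entries $W_{F_{c}}(\pm a)$; the sign of $a$ depends on the convention and is harmless. For weakly regular $F_{c}$ these entries are $\epsilon p^{n/2}\zeta_{p}^{(F_{c})^{*}(\pm a)}$. Consequently $H_{d}H_{e}^{*}H_{d-e}^{*}$ is diagonalized by the same $\Phi$, with diagonal entries
\begin{equation*}
\epsilon\bar{\epsilon}^{2}p^{3n/2}\zeta_{p}^{(F_{d})^{*}(\pm a)-(F_{e})^{*}(\pm a)-(F_{d-e})^{*}(\pm a)}.
\end{equation*}
Since the diagonalization is simultaneous, these products are all equal for $d\neq e\in V_{m}^{*}$ if and only if the exponent functions
\begin{equation*}
(F_{d})^{*}-(F_{e})^{*}-(F_{d-e})^{*}
\end{equation*}
coincide for all such pairs $(d,e)$.

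\emph{Reindexing.} I would rename $e'=d-e$, so $d=e+e'$. The condition $d\neq e$ becomes $e'\neq 0$, and $e\neq 0$ holds as well. The requirement $d\in V_{m}^{*}$ becomes $e\neq -e'$. After negating, the exponent functions read $(F_{e})^{*}+(F_{e'})^{*}-(F_{e+e'})^{*}$ over all $e\neq -e'\in V_{m}^{*}$, which is exactly condition (iii) of Theorem \ref{Theorem 1}. This reindexing is a bijection, so the equivalence holds in both directions. Theorem \ref{Theorem 1} then gives (i)$\Leftrightarrow$(ii) for the corollary; this is also the content of \cite[Proposition 2]{WWF2026Fu}, which could be cited directly.

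The only delicate point I expect is keeping the conjugations straight. Conjugating $H_{e}^{*}$ flips the sign of the dual exponent, and the $\epsilon$ factors combine to a constant modulus scalar that is the same for every pair $(d,e)$. I will verify that this scalar does not depend on $d$ or $e$ and that the $\pm a$ convention is uniform across all three factors. Equality of the matrices then reduces precisely to equality of the dual combinations.
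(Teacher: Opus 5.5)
Your proposal is correct and follows the paper's route. The paper derives the corollary directly from Theorem \ref{Theorem 1} together with \cite[Proposition 2]{WWF2026Fu}. Your simultaneous diagonalization, with eigenvalues $W_{F_d}(a)\overline{W_{F_e}(a)}\,\overline{W_{F_{d-e}}(a)}=\epsilon p^{3n/2}\zeta_p^{(F_d)^*(a)-(F_e)^*(a)-(F_{d-e})^*(a)}$, followed by the reindexing $e'=d-e$, is exactly the translation between condition (ii) of the corollary and condition (iii) of Theorem \ref{Theorem 1} that the cited proposition supplies.
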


In the following, we characterize bent partitions in class $\mathcal{WRBP}$ in terms of linear codes.

\begin{theorem} \label{Theorem 2}
Let $\Gamma=\{A_{i}, i \in V_{m}\}$ be a partition of $V_{n}$, where $n$ is even, $m \leq \frac{n}{2}$, and $m \geq 2$ if $p=2$. Let $F: V_{n} \rightarrow V_{m}$ be given by $F(x)=i, x \in A_{i}, i \in V_{m}$.
Then the following statements are equivalent.

(i) $\Gamma$ is a bent partition in class $\mathcal{WRBP}$.

(ii)  Define linear code $\mathcal{C}_{\Gamma}$ as follows:
\begin{equation*}
	\mathcal{C}_{\Gamma}=\{c_{\alpha, \beta, \gamma}=(\langle \alpha, F(x)\rangle_{m}+\langle \beta, x \rangle_{n}+\gamma)_{x \in V_{n}}: \alpha \in V_{m}, \beta \in V_{n}, \gamma \in \mathbb{F}_{p}\}.
\end{equation*}
Then $\mathcal{C}_{\Gamma}$ is a four weight $[p^{n}, n+m+1]_{p}$ linear code whose weight enumerator is
\begin{equation*}
1+(p-1)z^{p^n}+p(p^{n}-1)z^{p^{n-1}(p-1)}+p^{n}(p^{m}-1)z^{p^{n}-p^{n-1}-\epsilon p^{\frac{n}{2}-1}(p-1)}+p^{n}(p^{m}-1)(p-1)z^{p^{n}-p^{n-1}+\epsilon p^{\frac{n}{2}-1}},
\end{equation*}
and for any $\beta \in V_{n}$, there exists $\gamma_{\beta} \in \mathbb{F}_{p}$ such that $(\alpha+\alpha',  \gamma+\gamma'-\gamma_{\beta}) \in S_{\beta}$ whenever $(\alpha, \gamma), (\alpha', \gamma') \in S_{\beta}$ with $\alpha \neq -\alpha' \in V_{m}^{*}$, where $\epsilon \in \{\pm 1\}$ with $\epsilon=1$ if $p \neq 3$, and
\begin{equation*}
S_{\beta}=\{(\alpha, \gamma) \in V_{m}^{*} \times \mathbb{F}_{p}: wt(c_{\alpha, \beta, \gamma})=p^{n}-p^{n-1}-\epsilon p^{\frac{n}{2}-1}(p-1)\}.
\end{equation*}

(iii) For any $i \in V_{m}$, define linear code $\mathcal{C}_{A_{i}}$ as follows:
\begin{equation*}
	\mathcal{C}_{A_{i}}=\{c_{a, b}^{(i)}=(\langle a, x \rangle_{n}+b)_{x \in A_{i}}: a \in V_{n}, b \in \mathbb{F}_{p}\}.
\end{equation*}
Then there exists a function $\tau: V_{n} \rightarrow \mathbb{F}_{p}$ with $\tau(0)=0$ if $p \neq 2$ such that for any fixed $i \in V_{m}$ and $a \in V_{n}^{*}$, 
\begin{equation}\label{th2new1}
	wt(c_{a, b}^{(i)})=
	\begin{cases}
	\frac{(p-1)|A_{i}|-\epsilon p^{\frac{n}{2}-m}(p^m-1)(p-1)}{p}, & \text{ if } b=\tau(a),\\
	\frac{(p-1)|A_{i}|+\epsilon p^{\frac{n}{2}-m}(p^m-1)}{p}, & \text{ if } b \in \mathbb{F}_{p} \backslash \{\tau(a)\},
	\end{cases}
\end{equation}
or
\begin{equation}\label{th2new2}
wt(c_{a, b}^{(i)})=
   	\begin{cases}
   	\frac{(p-1)|A_{i}|+\epsilon p^{\frac{n}{2}-m}(p-1)}{p}, & \text{ if } b=\tau(a),\\
   	\frac{(p-1)|A_{i}|-\epsilon p^{\frac{n}{2}-m}}{p}, & \text{ if } b \in \mathbb{F}_{p} \backslash \{\tau(a)\},
   	\end{cases}
\end{equation}
where $\epsilon \in \{\pm 1\}$ with $\epsilon=1$ if $p \neq 3$, and
$|A_{i}|=p^{n-m}+\epsilon  p^{\frac{n}{2}-m}(p^m-1)\zeta_{p}^{\tau(0)}$ or $p^{n-m}-\epsilon p^{\frac{n}{2}-m}\zeta_{p}^{\tau(0)}$.
\end{theorem}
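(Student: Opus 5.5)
The plan is to prove (i)$\Leftrightarrow$(ii) and (i)$\Leftrightarrow$(iii) separately. Both routes go through Theorem \ref{Theorem 1}, or rather through the character-sum description obtained in its proof: $\Gamma$ is in $\mathcal{WRBP}$ if and only if there exist $G: V_{n} \rightarrow V_{m}$ and $h: V_{n}\rightarrow \mathbb{F}_{p}$ (with $h(0)=0$ when $p$ is odd) such that
\begin{equation*}
\chi_{-a}(A_{i})=p^{n-m}\delta_{0}(a)+\epsilon p^{\frac{n}{2}-m}\zeta_{p}^{h(a)}(p^m\delta_{G(a)}(i)-1), \quad a \in V_{n},\ i\in V_{m}.
\end{equation*}
For $p=2$, Eq. \eqref{R1} of Remark \ref{Remark 1} supplies this, with $\zeta_{2}^{h(0)}$ absorbing the sign. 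The basic translation tool is the standard formula: for a vector $v=(g(x))_{x\in D}$,
\begin{equation*}
wt(v)=|D|-\frac{1}{p}\sum_{w\in\mathbb{F}_{p}}\sum_{x\in D}\zeta_{p}^{wg(x)}.
\end{equation*}
Applying the Galois automorphisms $\sigma_{w}$ reduces each such sum to the $w=1$ term.

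\textbf{Step 1: (i)$\Rightarrow$(ii).} Take $c_{\alpha,\beta,\gamma}$ with $\alpha\neq 0$. The inner sum is $\zeta_{p}^{w\gamma}\,\sigma_{w}(W_{F_{\alpha}}(-\beta))=\epsilon p^{n/2}\zeta_{p}^{w((F_{\alpha})^{*}(-\beta)+\gamma)}$. The weight is therefore $p^{n}-p^{n-1}-\epsilon p^{n/2-1}(p-1)$ when $\gamma=-(F_{\alpha})^{*}(-\beta)$, and $p^{n}-p^{n-1}+\epsilon p^{n/2-1}$ otherwise. The cases $\alpha=0$ give the affine weights $0$, $p^{n}$ and $p^{n-1}(p-1)$. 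Counting the pairs yields the stated enumerator, and also shows the dimension is $n+m+1$, since no nonzero parameter triple gives weight $0$. Moreover $S_{\beta}=\{(\alpha,-(F_{\alpha})^{*}(-\beta))\}$, so the closure property of $S_{\beta}$ is exactly the statement that $(F_{d})^{*}+(F_{e})^{*}-(F_{d+e})^{*}$ is a function independent of $d,e$. Here $\gamma_{\beta}$ is minus that function evaluated at $-\beta$. This is Theorem \ref{Theorem 1}(iii).

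\textbf{Step 2: (ii)$\Rightarrow$(i).} I reverse the computation. From the weight alone, $\sum_{w}\zeta_{p}^{w\gamma}\sigma_{w}(W_{F_{\alpha}}(-\beta))$ is determined. I would sum over $\gamma$ with weights $\zeta_{p}^{-\gamma}$ and use the fact that each $\alpha\ne 0$ contributes exactly one $\gamma$ of the first weight. This should recover $W_{F_{\alpha}}(-\beta)=\epsilon p^{n/2}\zeta_{p}^{-\gamma_{0}}$, and hence that every $F_{\alpha}$ is bent and weakly regular with sign $\epsilon$. The main obstacle lies here: the enumerator only counts weights globally, so I must first argue that for each $(\alpha,\beta)$ exactly one $\gamma$ has the low weight. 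I plan to do this by a counting argument. The sum over $\gamma$ of $p-wt$ is fixed by linearity, and the total multiplicities are $p^{n}(p^{m}-1)$ and $p^{n}(p^m-1)(p-1)$, which match exactly one low-weight $\gamma$ per pair. Combined with the $S_{\beta}$ condition, which gives Theorem \ref{Theorem 1}(iii), this yields (i).

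\textbf{Step 3: (i)$\Leftrightarrow$(iii).} For $a\neq0$, $wt(c^{(i)}_{a,b})=|A_{i}|-\frac1p\sum_{w}\zeta_{p}^{wb}\sigma_{w}(\chi_{a}(A_{i}))$. Substituting the displayed formula with $\tau(a)=-h(-a)$, the cases $G(-a)=i$ and $G(-a)\neq i$ give Eq. \eqref{th2new1} and Eq. \eqref{th2new2} respectively. The value at $a=0$ gives the sizes $|A_{i}|$. Conversely, the weights determine $\sum_{w}\zeta_{p}^{wb}\sigma_{w}(\chi_{a}(A_{i}))$ for every $b$. A discrete Fourier inversion over $b$ recovers $\chi_{a}(A_{i})$, which is $-\epsilon p^{n/2-m}\zeta_{p}^{-\tau(a)}(p^{m}-1)$ or $\epsilon p^{n/2-m}\zeta_{p}^{-\tau(a)}$ up to the same normalization. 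Since $\sum_{i}\chi_{a}(A_{i})=0$, exactly one $i$ takes the first form; I define $G$ by this. The resulting description of all $\chi_{a}(A_{i})$ lets me compute $W_{B(F)}$ for every balanced $B$, exactly as in the proof of Theorem \ref{Theorem 1}. This gives $|W_{B(F)}|=p^{n/2}$ with constant sign $\epsilon$, which is (i).
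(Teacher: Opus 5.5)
Your proposal is correct and follows the paper's proof essentially step for step. The shared ingredients are the character-sum expressions for $wt(c_{\alpha,\beta,\gamma})$ and $wt(c^{(i)}_{a,b})$, Fourier inversion over $\gamma$ (resp.\ $b$), the counting argument forcing exactly one low-weight $\gamma$ per $(\alpha,\beta)$, the relation $\sum_{i}\chi_{a}(A_{i})=0$ to define $G$, and the reduction to Theorem~\ref{Theorem 1} (or to the direct $W_{B(F)}$ computation).

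The only slip is a sign in the converse of Step 3. Inversion gives $\chi_{a}(A_{i})=-\sum_{b}wt(c^{(i)}_{a,b})\zeta_{p}^{-b}\in\{\epsilon p^{\frac{n}{2}-m}(p^{m}-1)\zeta_{p}^{-\tau(a)},\,-\epsilon p^{\frac{n}{2}-m}\zeta_{p}^{-\tau(a)}\}$, which is the negative of what you wrote and is what your own forward computation already shows. The sign matters: for every generated function to be weakly regular with the single sign $\epsilon$, it must agree with the value at $a=0$, which you should take explicitly from the hypothesized sizes $|A_{i}|$.
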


\begin{proof}
For $w \in \mathbb{F}_{p}^{*}$, if $p=2$, let $\sigma_{w}$ denote the identity map of $\mathbb{Q}$, and if $p$ is odd, let $\sigma_{w}$ denote the automorphism of cyclotomic field $\mathbb{Q}(\zeta_{p})$ defined by $\sigma_{w}(\zeta_{p})=\zeta_{p}^{w}$.

$(i) \Rightarrow (ii)$: By Theorem \ref{Theorem 1}, $F_{c}, c \in V_{m}^{*}$, are all weakly regular bent with $\epsilon_{F_{c}}=\epsilon$, where $\epsilon\in \{\pm 1\}$ with $\epsilon=1$ if $p \neq 3$, and there exist functions $G: V_{n} \rightarrow V_{m}$ and $h: V_{n} \rightarrow \mathbb{F}_{p}$ such that $(F_{c})^{*}=G_{c}+h$ for any $c \in V_{m}^{*}$. Obviously, when $\alpha=0, \beta=0, \gamma \in \mathbb{F}_{p}^{*}$, $wt(c_{\alpha, \beta, \gamma})=p^{n}$; when $\alpha=0, \beta \in V_{n}^{*}, \gamma \in \mathbb{F}_{p}$, $wt(c_{\alpha, \beta, \gamma})=p^{n-1}(p-1)$. For $(\alpha, \beta, \gamma) \in V_{m}^{*} \times V_{n} \times \mathbb{F}_{p}$, we have
$wt(c_{\alpha, \beta, \gamma})=p^n-N_{\alpha, \beta,
\gamma}$, where
\begin{equation*}
	\begin{split}
		N_{\alpha, \beta, \gamma} &\triangleq |\{x \in V_{n}: \langle \alpha, F(x) \rangle_{m}+\langle \beta, x \rangle_{n}+\gamma=0\}|\\
		&=\frac{1}{p}\sum_{x \in V_{n}}\sum_{w \in \mathbb{F}_{p}}\zeta_{p}^{w(F_{\alpha}(x)+\langle \beta, x \rangle_{n}+\gamma)}\\
			\end{split}
	\end{equation*}
\begin{equation*}
\begin{split}
		&=p^{n-1}+\frac{1}{p}\sum_{w \in \mathbb{F}_{p}^{*}}\zeta_{p}^{w \gamma} \sum_{x \in V_{n}}\zeta_{p}^{w(F_{\alpha}(x)+\langle \beta, x \rangle_{n})}\\
		&=p^{n-1}+\frac{1}{p}\sum_{w \in \mathbb{F}_{p}^{*}}\zeta_{p}^{w \gamma}\sigma_{w}(W_{F_{\alpha}}(-\beta))\\
		&=p^{n-1}+\frac{1}{p}\sum_{w \in \mathbb{F}_{p}^{*}}\zeta_{p}^{w \gamma}\sigma_{w}(\epsilon p^{\frac{n}{2}}\zeta_{p}^{(F_{\alpha})^{*}(-\beta)})\\
		&=p^{n-1}+\epsilon p^{\frac{n}{2}-1}\sum_{w \in \mathbb{F}_{p}^{*}}\zeta_{p}^{w(G_{\alpha}(-\beta)+h(-\beta)+\gamma)}\\
		&=\begin{cases}
			p^{n-1}+\epsilon p^{\frac{n}{2}-1}(p-1), & \text{ if } G_{\alpha}(-\beta)+h(-\beta)+\gamma=0,\\
			p^{n-1}-\epsilon p^{\frac{n}{2}-1}, & \text{ if } G_{\alpha}(-\beta)+h(-\beta)+\gamma \neq 0.
		\end{cases}
	\end{split}
\end{equation*}
Hence, when $\alpha \in V_{m}^{*}, \beta \in V_{n}, \gamma=-G_{\alpha}(-\beta)-h(-\beta)$, we have $wt(c_{\alpha, \beta, \gamma})=p^{n}-p^{n-1}-\epsilon p^{\frac{n}{2}-1}(p-1)$; when $\alpha \in V_{m}^{*}, \beta \in V_{n}, \gamma \neq -G_{\alpha}(-\beta)-h(-\beta)$, we have $wt(c_{\alpha, \beta, \gamma})=p^{n}-p^{n-1}+\epsilon p^{\frac{n}{2}-1}$. For any $\beta \in V_{n}$, let $\gamma_{\beta}=-h(-\beta)$. If $(\alpha, \gamma), (\alpha', \gamma') \in S_{\beta}$ with $\alpha \neq -\alpha' \in V_{m}^{*}$, then $\gamma+\gamma'-\gamma_{\beta}=-G_{\alpha}(-\beta)+\gamma_{\beta}-G_{\alpha'}(-\beta)+\gamma_{\beta}-\gamma_{\beta}=-G_{\alpha+\alpha'}(-\beta)+\gamma_{\beta}$, and thus $(\alpha+\alpha', \gamma+\gamma'-\gamma_{\beta}) \in S_{\beta}$.

$(ii) \Rightarrow (i)$: We claim that for any $(\alpha, \beta, \gamma) \in V_{m}^{*} \times V_{n} \times \mathbb{F}_{p}$,
\begin{equation*}
	wt(c_{\alpha, \beta, \gamma})=p^{n}-p^{n-1}-\epsilon p^{\frac{n}{2}-1}(p\delta_{\tau_{\alpha, \beta}}(\gamma)-1),
\end{equation*}
where $\tau_{\alpha, \beta} \in \mathbb{F}_{p}$ satisfies that for any $\beta \in \mathbb{F}_{p}$, $\tau_{\alpha, \beta}+\tau_{\alpha', \beta}-\tau_{\alpha+\alpha', \beta}=\gamma_{\beta}$ for all $\alpha \neq -\alpha' \in V_{m}^{*}$.
Obviously, when $\alpha=0, \beta=0, \gamma \in \mathbb{F}_{p}^{*}$, $wt(c_{\alpha, \beta, \gamma})=p^{n}$; when $\alpha=0, \beta \in V_{n}^{*}, \gamma \in \mathbb{F}_{p}$, $wt(c_{\alpha, \beta, \gamma})=p^{n-1}(p-1)$. Then by the weight enumerator of $\mathcal{C}_{\Gamma}$, for any $(\alpha, \beta, \gamma) \in V_{m}^{*} \times V_{n} \times \mathbb{F}_{p}$, $wt(c_{\alpha, \beta, \gamma})=p^{n}-p^{n-1}-\epsilon p^{\frac{n}{2}-1}(p-1)$ or $p^{n}-p^{n-1}+\epsilon p^{\frac{n}{2}-1}$. We first show that for any $(\alpha, \beta) \in V_{m}^{*} \times V_{n}$, there is unique $\tau_{\alpha, \beta} \in \mathbb{F}_{p}$ such that $wt(c_{\alpha, \beta, \gamma})=p^{n}-p^{n-1}-\epsilon p^{\frac{n}{2}-1}(p\delta_{\tau_{\alpha, \beta}}(\gamma)-1)$. To the contrary, assume that there is $(\alpha_{0}, \beta_{0}) \in V_{m}^{*} \times V_{n}$ such that $wt(c_{\alpha_{0}, \beta_{0}, \gamma})=p^{n}-p^{n-1}+\epsilon p^{\frac{n}{2}-1}$ for all $\gamma \in \mathbb{F}_{p}$, or  $wt(c_{\alpha_{0}, \beta_{0}, \gamma})=wt(c_{\alpha_{0}, \beta_{0}, \gamma'})=p^{n}-p^{n-1}-\epsilon p^{\frac{n}{2}-1}(p-1)$ for some $\gamma \neq \gamma' \in \mathbb{F}_{p}$. If $wt(c_{\alpha_{0}, \beta_{0}, \gamma})=p^{n}-p^{n-1}+\epsilon p^{\frac{n}{2}-1}$ for all $\gamma \in \mathbb{F}_{p}$, then for any $\gamma \in \mathbb{F}_{p}$,
\begin{equation*}
N_{\alpha_{0}, \beta_{0}, \gamma}=|\{x \in V_{n}: \langle \alpha_{0}, F(x) \rangle_{m}+\langle \beta_{0}, x \rangle_{n}+\gamma=0\}| =p^{n-1}-\epsilon p^{\frac{n}{2}-1}
,
\end{equation*}
which contradicts the fact that
\begin{equation*}
\sum_{\gamma \in \mathbb{F}_{p}} N_{\alpha_{0}, \beta_{0}, \gamma}=p^n.
\end{equation*}
 Thus, for any $(\alpha, \beta) \in V_{m}^{*} \times V_{n}$, there is $\gamma \in \mathbb{F}_{p}$ such that $wt(c_{\alpha, \beta, \gamma})=p^{n}-p^{n-1}-\epsilon p^{\frac{n}{2}-1}(p-1)$. If $wt(c_{\alpha_{0}, \beta_{0}, \gamma})=wt(c_{\alpha_{0}, \beta_{0}, \gamma'})=p^{n}-p^{n-1}-\epsilon p^{\frac{n}{2}-1}(p-1)$ for some $\gamma \neq \gamma' \in \mathbb{F}_{p}$, then
 \begin{equation*}
 |\{(\alpha, \beta, \gamma) \in V_{m} \times V_{n} \times \mathbb{F}_{p}: wt(c_{\alpha, \beta, \gamma})=p^{n}-p^{n-1}-\epsilon p^{\frac{n}{2}-1}(p-1)\}|\geq (p^{m}-1)p^{n}+1,
 \end{equation*}
 which contradicts the weight enumerator of $\mathcal{C}_{\Gamma}$. Therefore, for each $(\alpha, \beta) \in V_{m}^{*} \times V_{n}$, there is unique $\tau_{\alpha, \beta} \in \mathbb{F}_{p}$ such that $wt(c_{\alpha, \beta, \gamma})=p^{n}-p^{n-1}-\epsilon p^{\frac{n}{2}-1}(p\delta_{\tau_{\alpha, \beta}}(\gamma)-1), \gamma \in \mathbb{F}_{p}$. Note that  $(\alpha, \gamma) \in S_{\beta}$ if and only if $\gamma=\tau_{\alpha, \beta}$. Since for any $\beta \in V_{n}$, there exists $\gamma_{\beta} \in \mathbb{F}_{p}$ such that $(\alpha+\alpha',  \gamma+\gamma'-\gamma_{\beta}) \in S_{\beta}$ whenever $(\alpha, \gamma), (\alpha', \gamma') \in S_{\beta}$ with $\alpha \neq -\alpha' \in V_{m}^{*}$, then $\tau_{\alpha, \beta}+\tau_{\alpha', \beta}-\tau_{\alpha+\alpha', \beta}=\gamma_{\beta}$ for all $\alpha \neq -\alpha' \in V_{m}^{*}$. Hence, the claim holds.

For any $(\alpha, \beta, \gamma) \in V_{m}^{*} \times V_{n} \times \mathbb{F}_{p}$,
\begin{equation*}
	\begin{split}
		N_{\alpha, \beta, \gamma}
		&=|\{x \in V_{n}: \langle \alpha, F(x) \rangle_{m}+\langle \beta, x \rangle_{n}+\gamma=0\}|\\
		&=\frac{1}{p}\sum_{x \in V_{n}}\sum_{w \in \mathbb{F}_{p}}\zeta_{p}^{w(\langle \alpha, F(x)\rangle_{m}+\langle \beta, x \rangle_{n}+\gamma)}\\
		&=p^{n-1}+\frac{1}{p}\sum_{w \in \mathbb{F}_{p}^{*}}\zeta_{p}^{w \gamma}W_{F_{w \alpha}(-w\beta)}\\
		&=p^{n-1}+\frac{1}{p}\sum_{w \in \mathbb{F}_{p}}\zeta_{p}^{w \gamma}T^{(\alpha, \beta)}(w)\\
		&=p^{n-1}+\frac{1}{p}
		\widehat{T^{(\alpha, \beta)}}(\gamma),
	\end{split}
\end{equation*}
where
\begin{equation*}
T^{(\alpha, \beta)}(x)=\begin{cases}
	0, & \text{ if } x=0,\\
	W_{F_{x\alpha}}(-x\beta), & \text{ if } x \in \mathbb{F}_{p}^{*},
\end{cases}
\end{equation*}
and $\widehat{T^{(\alpha, \beta)}}$ is the Fourier transform of $T^{(\alpha, \beta)}$. Thus, $wt(c_{\alpha, \beta, \gamma})=p^{n}-p^{n-1}-\frac{1}{p}\widehat{T^{(\alpha, \beta)}}(\gamma)$. By the inverse Fourier transform, for $x \in \mathbb{F}_{p}^{*}$,
\begin{equation*}
		T^{(\alpha, \beta)}(x)=\frac{1}{p}\sum_{\gamma \in \mathbb{F}_{p}}\widehat{T^{(\alpha, \beta)}}(\gamma)\zeta_{p}^{-\gamma x}=\sum_{\gamma \in \mathbb{F}_{p}}(p^n-p^{n-1}-wt(c_{\alpha, \beta, \gamma}))\zeta_{p}^{-\gamma x}=-\sum_{\gamma \in \mathbb{F}_{p}}wt(c_{\alpha, \beta, \gamma})\zeta_{p}^{-\gamma x}.
\end{equation*}
Hence, for any $(\alpha, \beta) \in V_{m}^{*} \times V_{n}$, we have
\begin{equation*}
	\begin{split}
	W_{F_{\alpha}}(-\beta)=T^{(\alpha, \beta)}(1)
	&=-\sum_{\gamma \in \mathbb{F}_{p}}wt(c_{\alpha, \beta, \gamma})\zeta_{p}^{-\gamma}\\
	&=-\sum_{\gamma \in \mathbb{F}_{p}}\Big(p^{n}-p^{n-1}-\epsilon p^{\frac{n}{2}-1}(p\delta_{\tau_{\alpha, \beta}}(\gamma)-1)\Big)\zeta_{p}^{-\gamma}\\
	&=\epsilon p^{\frac{n}{2}}\zeta_{p}^{-\tau_{\alpha, \beta}},
	\end{split}
\end{equation*}
which implies that $F$ is a vectorial bent function for which any component function $F_{\alpha}$ ($\alpha \in V_{m}^{*}$) is weakly regular with $\epsilon_{F_{\alpha}}=\epsilon$ and  $(F_{\alpha})^{*}(x)=-\tau_{\alpha, -x}, x \in V_{n}$. For any $x \in V_{n}$ and $\alpha \neq -\alpha' \in V_{m}^{*}$,  since $\tau_{\alpha, -x}+\tau_{\alpha', -x}-\tau_{\alpha+\alpha', -x}=\gamma_{-x}$, then $(F_{\alpha})^{*}(x)+(F_{\alpha'})^{*}(x)-(F_{\alpha+\alpha'})^{*}(x)=-\gamma_{-x}$. By Theorem \ref{Theorem 1}, $\Gamma$ is a bent partition in class $\mathcal{WRBP}$.

$(i) \Rightarrow (iii)$:
By Theorem \ref{Theorem 1} and its proof and Remark \ref{Remark 1}, there are functions $G: V_{n} \rightarrow V_{m}$ and $\tau: V_{n} \rightarrow \mathbb{F}_{p}$ with $\tau(0)=0$ if $p \neq 2$ such that
\begin{equation*}
	\chi_{-a}(A_{i})=p^{n-m}\delta_{0}(a)+\epsilon p^{\frac{n}{2}-m}\zeta_{p}^{\tau(a)}(p^m \delta_{G(a)}(i)-1),
\end{equation*}
where $\epsilon \in \{\pm 1\}$ with $\epsilon=1$ if $p \neq 3$. For $a \in V_{n}^{*}, b \in \mathbb{F}_{p}$ and $i \in V_{m}$,
\begin{equation*}
	\begin{split}
		N_{a, b}^{(i)}&\triangleq|\{x \in A_{i}: \langle a, x \rangle_{n}+b=0 \}|\\
		&=\frac{1}{p}\sum_{x \in A_{i}}\sum_{w \in \mathbb{F}_{p}}\zeta_{p}^{-w(\langle a, x\rangle_{n}+b)}\\
		&=\frac{1}{p}|A_{i}|+\frac{1}{p}\sum_{w \in \mathbb{F}_{p}^{*}}\zeta_{p}^{-wb}\sum_{x \in A_{i}}\zeta_{p}^{w\langle -a, x\rangle_{n}}\\
		&=\frac{1}{p}|A_{i}|+\frac{1}{p}\sum_{w \in \mathbb{F}_{p}^{*}}\zeta_{p}^{-wb}\sigma_{w}(\chi_{-a}(A_{i}))\\
		&=\frac{1}{p}|A_{i}|+\frac{1}{p}\sum_{w \in \mathbb{F}_{p}^{*}}\zeta_{p}^{-wb}\sigma_{w}\Big(\epsilon p^{\frac{n}{2}-m}\zeta_{p}^{\tau(a)}(p^m \delta_{G(a)}(i)-1)\Big)\\
		&=\frac{1}{p}|A_{i}|+\frac{1}{p}\sum_{w \in \mathbb{F}_{p}^{*}}\zeta_{p}^{-wb}\Big(\epsilon p^{\frac{n}{2}-m}\zeta_{p}^{w\tau(a)}(p^m \delta_{G(a)}(i)-1)\Big)\\
		&=\frac{1}{p}|A_{i}|+\epsilon p^{\frac{n}{2}-m-1}(p^m \delta_{G(a)}(i)-1)\sum_{w \in \mathbb{F}_{p}^{*}}\zeta_{p}^{w(-b+\tau(a))}\\
	\end{split}
	\end{equation*}
	\begin{equation}\label{addth21}
	\begin{split}
		&=\begin{cases}
			\frac{|A_{i}|+\epsilon p^{\frac{n}{2}-m}(p^m-1)(p-1)}{p}, & \text{ if } G(a)=i, b=\tau(a),\\
			\frac{|A_{i}|-\epsilon p^{\frac{n}{2}-m}(p^m-1)}{p}, & \text{ if } G(a)=i, b \neq \tau(a),\\
			\frac{|A_{i}|-\epsilon p^{\frac{n}{2}-m}(p-1)}{p}, & \text{ if } G(a) \neq i, b=\tau(a),\\
			\frac{|A_{i}|+\epsilon p^{\frac{n}{2}-m}}{p}, & \text{ if } G(a) \neq i, b \neq \tau(a).
		\end{cases}
	\end{split}
\end{equation}
Thus, for any fixed $i \in V_{m}$ and $a \in V_{n}^{*}$, Eq. \eqref{th2new1} or Eq. \eqref{th2new2} holds.

$(iii) \Rightarrow (i)$: For any $a \in V_{n}^{*}$ and $i \in V_{m}$, define
\begin{equation*}
	T^{(a, i)}(x)=\begin{cases}
		0, & \text{ if } x=0,\\
		\sigma_{x}(\chi_{-a}(A_{i})), & \text{ if } x \in \mathbb{F}_{p}^{*}.
	\end{cases}
\end{equation*}
For $a \in V_{n}^{*}, b \in \mathbb{F}_{p}, i \in V_{m}$,
\begin{equation*}
N_{a, b}^{(i)}=|\{x \in A_{i}: \langle a, x \rangle_{n}+b=0 \}|=\frac{1}{p}|A_{i}|+\frac{1}{p}\sum_{w \in \mathbb{F}_{p}}\zeta_{p}^{-wb}T^{(a, i)}(w)=\frac{1}{p}|A_{i}|+\frac{1}{p}\widehat{T^{(a, i)}}(-b).
\end{equation*}
Then for $a \in V_{n}^{*}, i \in V_{m}$ and $x \in \mathbb{F}_{p}^{*}$, by the inverse Fourier transform, we have
\begin{equation*}
	T^{(a, i)}(x)=\frac{1}{p}\sum_{b \in \mathbb{F}_{p}}\widehat{T^{(a, i)}}(-b)\zeta_{p}^{bx}=\sum_{b \in \mathbb{F}_{p}}\Big(\frac{p-1}{p}|A_{i}|-wt(c_{a, b}^{(i)})\Big)\zeta_{p}^{bx}=-\sum_{b \in \mathbb{F}_{p}}wt(c_{a, b}^{(i)})\zeta_{p}^{bx}.
\end{equation*}
Let $x=1$, for any fixed $a \in V_{n}^{*}$ and $i \in V_{m}$, since Eq. \eqref{th2new1} or Eq. \eqref{th2new2} holds, then we have
\begin{equation*}
	\begin{split}
		\chi_{-a}(A_{i})&=T^{(a, i)}(1)=-\sum_{b \in \mathbb{F}_{p}}wt(c_{a, b}^{(i)})\zeta_{p}^{b}\\
		& \in \Big\{-\sum_{b \in \mathbb{F}_{p}}\frac{(p-1)|A_{i}|-\epsilon p^{\frac{n}{2}-m}(p^m-1)(p \delta_{\tau(a)}(b)-1)}{p}\zeta_{p}^{b}, \\
		& \ \ \ \ \ -\sum_{b \in \mathbb{F}_{p}}\frac{(p-1)|A_{i}|+\epsilon p^{\frac{n}{2}-m}(p \delta_{\tau(a)}(b)-1)}{p}\zeta_{p}^{b}\Big\}\\
		&=\big\{\epsilon p^{\frac{n}{2}-m}(p^m-1)\zeta_{p}^{\tau(a)}, -\epsilon p^{\frac{n}{2}-m}\zeta_{p}^{\tau(a)}\big\}.
	\end{split}
\end{equation*}
Then for any $a \in V_{n}$ and $i \in V_{m}$,
\begin{equation*}
	\chi_{-a}(A_{i})=p^{n-m}\delta_{0}(a)+\epsilon p^{\frac{n}{2}-m}(p^m-1)\zeta_{p}^{\tau(a)} \text{ or } p^{n-m}\delta_{0}(a)-\epsilon p^{\frac{n}{2}-m}\zeta_{p}^{\tau(a)}.
\end{equation*}
For any $a \in V_{n}$, let $n_{a}=|\{i \in V_{m}: \chi_{-a}(A_{i})=p^{n-m}\delta_{0}(a)+\epsilon p^{\frac{n}{2}-m}(p^m-1)\zeta_{p}^{\tau(a)}\}|$. Then
\begin{equation*}
	\begin{split}
	p^{n}\delta_{0}(a)&=\chi_{-a}(V_{n})\\
	&=\sum_{i \in V_{m}}\chi_{-a}(A_{i})\\
	&=n_{a}\big(p^{n-m}\delta_{0}(a)+\epsilon p^{\frac{n}{2}-m}\zeta_{p}^{\tau(a)}(p^m-1)\big)+(p^m-n_{a})\big(p^{n-m}\delta_{0}(a)-\epsilon p^{\frac{n}{2}-m}\zeta_{p}^{\tau(a)}\big),
	\end{split}
\end{equation*}
which implies that $n_{a}=1$. Therefore, there exists $G: V_{n} \rightarrow V_{m}$ such that for any $a \in V_{n}$ and $i \in V_{m}$,
\begin{equation}\label{th24}
	\chi_{-a}(A_{i})=p^{n-m}\delta_{0}(a)+\epsilon p^{\frac{n}{2}-m}(p^m \delta_{G(a)}(i)-1)\zeta_{p}^{\tau(a)}.
\end{equation}
Denote $F(x)=i, x \in A_{i}, i \in V_{m}$. For any balanced map $B$ from $V_{m}$ to $\mathbb{F}_{p}$ and $a \in V_{n}$, by Eq. \eqref{th24},
\begin{equation*}
W_{B(F)}(a)=\sum_{x \in V_{n}}\zeta_{p}^{B(F(x))-\langle a, x \rangle_{n}}=\sum_{i \in V_{m}}\zeta_{p}^{B(i)}\chi_{-a}(A_{i})=\epsilon p^{\frac{n}{2}}\zeta_{p}^{B(G(a))+\tau(a)}.
\end{equation*}
Then $B(F(x))$ is a weakly regular bent function with $\epsilon_{B(F)}=\epsilon$, and thus $\Gamma$ is a bent partition in class $\mathcal{WRBP}$.
\end{proof}

\begin{remark}
Keep the same notations as in Theorem \ref{Theorem 2}. Suppose $\Gamma=\{A_{i}, i \in V_{m}\}$ is a bent partition of $V_{n}$ in class $\mathcal{WRBP}$, where $m \geq 2$ if $p=2$. Note that $\tau(0)=0$ if $p$ is odd, or $p=2$ and $(m, n)$ with $3\leq m=\frac{n}{2}$ by Theorem \ref{Theorem 2} and Proposition \ref{Proposition 1}. By Eq. \eqref{addth21} and the value distribution of $G$ given in Eq. \eqref{co5} for odd $p$ and Remark \ref{Remark 1} for $p=2$, the parameters and weight enumerators of $\mathcal{C}_{A_{i}}, i \in V_{m},$ can be readily obtained as follows.
\begin{itemize}
\item Denote $l_{1}=p^{n-m}-\epsilon p^{\frac{n}{2}-m}\zeta_{p}^{\tau(0)}$. Except for the case where $m=\frac{n}{2}, \epsilon=1, \tau(0)=0$, $\mathcal{C}_{A_{i}} (i \neq G(0))$ is a $[l_{1}, n+1]_{p}$ linear code; furthermore, when $m \geq 2$, $\mathcal{C}_{A_{i}} (i \neq G(0))$ is $5$-weight and its weight enumerator is $1+l_{1}z^{\frac{(p-1)l_{1}-\epsilon p^{\frac{n}{2}-m}(p^m-1)(p-1)}{p}}+(p-1)l_{1}z^{\frac{(p-1)l_{1}+\epsilon p^{\frac{n}{2}-m}(p^m-1)}{p}}+(p^n-1-l_{1})z^{\frac{(p-1)l_{1}+\epsilon p^{\frac{n}{2}-m}(p-1)}{p}}+(p-1)(p^n-1-l_{1})z^{\frac{(p-1)l_{1}-\epsilon p^{\frac{n}{2}-m}}{p}}+(p-1)z^{l_{1}}$, and when $m=1$ (in this case, $p$ is odd), $\mathcal{C}_{A_{i}} (i \neq G(0))$ is $4$-weight and its weight enumerator is $1+l_{1}z^{\frac{(p-1)l_{1}-\epsilon p^{\frac{n}{2}-1}(p-1)^{2}}{p}}+(p^n-1+(p-2)l_{1})z^{\frac{(p-1)l_{1}+\epsilon p^{\frac{n}{2}-1}(p-1)}{p}}+(p-1)(p^n-1-l_{1})z^{\frac{(p-1)l_{1}-\epsilon p^{\frac{n}{2}-1}}{p}}+(p-1)z^{l_{1}}$. In the case where $m=\frac{n}{2}, \epsilon=1, \tau(0)=0$, $\mathcal{C}_{A_{i}} (i \neq G(0))$ is a $[p^{\frac{n}{2}}-1, \frac{n}{2}+1]_{p}$ linear code; furthermore, when $m \geq 2$, $\mathcal{C}_{A_{i}} (i \neq G(0))$ is $3$-weight and its weight enumerator is $1+(p-1)z^{p^{\frac{n}{2}}-1}+(p^\frac{n}{2}-1)z^{(p-1)p^{\frac{n}{2}-1}}+(p-1)(p^{\frac{n}{2}}-1)z^{p^{\frac{n}{2}}-p^{\frac{n}{2}-1}-1}$, and when $m=1$ (in this case, $n=2$ and $p$ is odd), $\mathcal{C}_{A_{i}} (i \neq G(0))$ is $2$-weight and its weight enumerator is $1+2(p-1)z^{p-1}+(p-1)^{2}z^{p-2}$.
\item Denote $l_{2}=p^{n-m}+\epsilon p^{\frac{n}{2}-m}(p^m-1)\zeta_{p}^{\tau(0)}$. Except for the case where $p=2, m=2, n=4, \tau(0)=1$, $\mathcal{C}_{A_{G(0)}}$ is a $[l_{2}, n+1]_{p}$ linear code; furthermore, when $m \geq 2$, $\mathcal{C}_{A_{G(0)}}$ is $5$-weight and its weight enumerator is $1+(l_{2}-1)z^{\frac{(p-1)l_{2}-\epsilon p^{\frac{n}{2}-m}(p^m-1)(p-1)}{p}}+(p-1)(l_{2}-1)z^{\frac{(p-1)l_{2}+\epsilon p^{\frac{n}{2}-m}(p^m-1)}{p}}+(p^n-l_{2})z^{\frac{(p-1)l_{2}+\epsilon p^{\frac{n}{2}-m}(p-1)}{p}}+(p-1)(p^n-l_{2})z^{\frac{(p-1)l_{2}-\epsilon p^{\frac{n}{2}-m}}{p}}+(p-1)z^{l_{2}}$, and when $m=1$ (in this case, $p$ is odd), $\mathcal{C}_{A_{G(0)}}$ is $4$-weight and its weight enumerator is $1+(l_{2}-1)z^{\frac{(p-1)l_{2}-\epsilon p^{\frac{n}{2}-1}(p-1)^{2}}{p}}+(p^n-p+1+(p-2)l_{2})z^{\frac{(p-1)l_{2}+\epsilon p^{\frac{n}{2}-1}(p-1)}{p}}+(p-1)(p^n-l_{2})z^{\frac{(p-1)l_{2}-\epsilon p^{\frac{n}{2}-1}}{p}}+(p-1)z^{l_{2}}$. In the case where $p=2, m=2, n=4, \tau(0)=1$, $\mathcal{C}_{A_{G(0)}}$ is the trivial $[1, 1]_{2}$ code.
\end{itemize}
\end{remark}

For a bent partition $\Gamma$ of $V_{n}$ of depth $p^{\frac{n}{2}}$ ($n \geq 4$) in class $\mathcal{WRBP}$, except for the case where $(p, n)=(2, 4)$ and there is $i_{0}$ such that $|A_{i_{0}}|=1$, we show that $\Gamma$ is obtained from a spread (after a shift).

\begin{theorem}\label{RevisedTheorem 3}
Let $n \geq 4$ be an even integer. Suppose $\Gamma=\{A_{i}, i \in V_{n/2}\}$ is a bent partition of $V_{n}$ in class $\mathcal{WRBP}$, where $p$ is odd, or $p=2$, $n \geq 6$, or $(p, n)=(2, 4)$ and $|A_{i}|\neq 1$ for all $i \in V_{n/2}$. Then there exist $\alpha \in V_{n}$ and $i_{0} \in V_{n/2}$ such that $A_{i}=(U_{i}-\alpha) \backslash \{-\alpha\}, i \neq i_{0}, A_{i_{0}}=(V \cup W)-\alpha$, where $\{V, W, U_{i}, i \in V_{n/2} \backslash \{i_{0}\}\}$ is a spread of $V_{n}$.
\end{theorem}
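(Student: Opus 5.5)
The plan is to read off the shape of each $A_{i}$ from the character values obtained in the proof of Theorem \ref{Theorem 1} and in Remark \ref{Remark 1}, now with $m=\frac{n}{2}$.

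\textbf{Step 1 (normalization).} I first reduce to the case $\epsilon=1$ and $\tau(0)=0$, where $\tau$ is the function of Theorem \ref{Theorem 2}(iii). For odd $p\neq 3$ this is Theorem \ref{Theorem 1}. For $p=2$, $n\geq 6$ it is Proposition \ref{Proposition 1}. For $(p,n)=(2,4)$ it is the hypothesis $|A_{i}|\neq 1$.

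For $p=3$ with $\epsilon=-1$, Eq. \eqref{add} gives $|A_{k}|=1$ and $|A_{i}|=3^{\frac{n}{2}}+1$ for $i\neq k$. I would rule this out by copying the proof of Proposition \ref{Proposition 1}. Put $A_{k}=\{\alpha\}$; then the character formula forces $\tau(a)=\langle-\alpha,a\rangle_{n}$ up to sign convention. Then, for $i\neq k$, the third moment $\sum_{a}\chi_{a}(A_{i}+\alpha)^{3}$ computed from the known values can be compared with the combinatorial identity $3^{n}|\{(x,y): x,y,-x-y\in A_{i}+\alpha\}|\geq 0$. This step is a computation I have not yet checked.

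After the normalization we have, for all $a\in V_{n}$ and $i\in V_{n/2}$,
\[
\chi_{-a}(A_{i})=p^{\frac{n}{2}}\delta_{0}(a)+\zeta_{p}^{\tau(a)}\bigl(p^{\frac{n}{2}}\delta_{G(a)}(i)-1\bigr),
\]
with $|A_{i}|=p^{\frac{n}{2}}-1$ for $i\neq i_{0}:=G(0)$ and $|A_{i_{0}}|=2p^{\frac{n}{2}}-1$. Eq. \eqref{co5} and Remark \ref{Remark 1} give $|D_{G,i}|=p^{\frac{n}{2}}-1$ for $i\neq i_{0}$. Eq. \eqref{th14} (resp. the identity map when $p=2$) gives $G(wa)=G(a)$ and $\tau(wa)=w\tau(a)$.

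\textbf{Step 2 (each $A_{i}$, $i\neq i_{0}$, is a punctured affine $\frac{n}{2}$-flat).} Take $i\neq i_{0}$ and $a\in D_{G,i}$, $a\neq 0$. By Eq. \eqref{addth21}, the count $N^{(i)}_{a,\tau(a)}$ equals $(|A_{i}|+(p^{\frac{n}{2}}-1)(p-1))/p=|A_{i}|$. Hence $A_{i}$ lies in the affine hyperplane $\{x:\langle a,x\rangle_{n}=-\tau(a)\}$ for every $a\in D_{G,i}^{*}$.

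The intersection of these hyperplanes is nonempty and is a coset of $(\mathrm{span}\,D_{G,i})^{\perp}$. It contains the $p^{\frac{n}{2}}-1$ points of $A_{i}$, so $\dim\mathrm{span}\,D_{G,i}\leq\frac{n}{2}$. On the other hand $D_{G,i}^{*}$ has $p^{\frac{n}{2}}-1$ nonzero elements, which forces $\mathrm{span}\,D_{G,i}=D_{G,i}^{*}\cup\{0\}$ to be a subspace of dimension exactly $\frac{n}{2}$. Setting $U_{i}=(\mathrm{span}\,D_{G,i})^{\perp}$, we get $A_{i}=(U_{i}+\beta_{i})\setminus\{\gamma_{i}\}$.

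Since the sets $D_{G,i}^{*}$ are pairwise disjoint, the spaces $U_{i}^{\perp}$ meet pairwise trivially. Then $U_{i}\cap U_{j}=(U_{i}^{\perp}+U_{j}^{\perp})^{\perp}=\{0\}$, so $\{U_{i}\}_{i\neq i_{0}}$ is a partial spread.

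\textbf{Step 3 (a common shift).} Take $a\neq 0$ with $a\notin U_{i}^{\perp}$. Then $\chi_{-a}(U_{i}+\beta_{i})=0$, so $\chi_{-a}(A_{i})=-\zeta_{p}^{\langle -a,\gamma_{i}\rangle_{n}}$. Comparing with the formula gives $\tau(a)=\langle -a,\gamma_{i}\rangle_{n}$ for all $a\notin U_{i}^{\perp}$.

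For $i\neq j$, the sets $V_{n}\setminus U_{i}^{\perp}$ and $V_{n}\setminus U_{j}^{\perp}$ intersect in a set spanning $V_{n}$. Therefore $\gamma_{i}=\gamma_{j}=:-\alpha$ and $\tau(a)=\langle \alpha,a\rangle_{n}$ is linear. Since $\gamma_{i}\in U_{i}+\beta_{i}$, we obtain $A_{i}=(U_{i}-\alpha)\setminus\{-\alpha\}$ for all $i\neq i_{0}$.

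\textbf{Step 4 (the big block; main obstacle).} It remains to show that $T:=(A_{i_{0}}+\alpha)$, which is $\{0\}$ together with the $2p^{\frac{n}{2}}-2$ nonzero vectors not covered by the $U_{i}$, is a union $V\cup W$ of two $\frac{n}{2}$-spaces meeting in $0$. This step is the main obstacle.

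I would attack it the same way as Step 2, but on the dual side. From the formula with $\tau$ linear, $\chi_{a}(T)$ takes the value $p^{\frac{n}{2}}-1$ on the nonzero vectors of $D_{G,i_{0}}=\bigcup_{i\neq i_{0}}U_{i}^{\perp}$, the value $2p^{\frac{n}{2}}-1$ at $a=0$, and $-1$ elsewhere.

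I would then apply the hyperplane-count trick to the dual partition $\{D_{G,i}\}$. That partition also has maximal depth, since each dual $G_{c}+\tau$ is weakly regular bent with dual essentially $F_{c}(-x)$. This should show that $D_{G,i_{0}}\cup\{0\}$ is a union of two subspaces $V^{\perp},W^{\perp}$, where $V,W$ are the spaces filling the complement of the partial spread. Then $T=V\cup W$ should follow from comparing the characters of $T$ and $V\cup W$, which are both determined by $\{U_{i}^{\perp}\}$.

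If this dual argument stalls, I would fall back on a direct count. Every line through $0$ in $T$ spans, together with the $U_{i}$, a configuration in which the intersection sizes with $U_{i}$ are forced by the values $\chi_{a}(T)$. From that I would try to show $T$ is closed under addition on two maximal pieces. In either route, the small case $(p,n)=(2,4)$ needs separate checking.
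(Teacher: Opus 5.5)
Your Steps 1--3 are correct. The third-moment check you left open for $p=3$, $\epsilon=-1$ does go through. With $A_{G(0)}=\{\alpha\}$ and $B=A_i-\alpha$ ($i\neq G(0)$), all $\chi_a(B)$ are real and $\sum_{a}\chi_a(B)^3=3^{n}(3^{\frac{n}{2}}+1)(4-3^{\frac{n}{2}})<0$ for $n\geq 4$. The paper instead counts the level sets of $\langle -a_0,x\rangle_n$ on $A_u$ and gets a negative count. Your Steps 2--3 use the same saturation fact as the paper: for $a\in D_{G,i}^{*}$, $i\neq i_0$, all of $A_i$ lies on one affine hyperplane. You read it geometrically, obtaining punctured cosets of $U_i=(D_{G,i}\cup\{0\})^{\perp}$ with a common puncture. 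The paper uses it in a five-case proof that $h$ is additive, followed by a character count. Both routes reach the same intermediate statement: $A_i=(U_i-\alpha)\setminus\{-\alpha\}$, with $\{U_i\}_{i\neq i_0}$ a partial spread of size $p^{\frac{n}{2}}-1$. The genuine gap is Step 4, which you flag but do not close, and neither of your two routes can close it.

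First, your character values for $T$ are mislabelled. Since $D_{G,i}\cup\{0\}=U_i^{\perp}$ for $i\neq i_0$, the set $D_{G,i_0}^{*}$ is the \emph{complement} of $\bigcup_{i\neq i_0}U_i^{\perp}$ in $V_n^{*}$, and $\chi_a(T)=p^{\frac{n}{2}}-1$ exactly there.

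More importantly, these values carry no information beyond the partial spread itself. Take \emph{any} $p^{\frac{n}{2}}-1$ pairwise trivially intersecting $\frac{n}{2}$-dimensional subspaces $U_i$. Put $A_i=U_i\setminus\{0\}$ for $i\neq i_0$ and let $A_{i_0}$ be the remaining vectors. Then $\chi_b(A_i)=p^{\frac{n}{2}}\delta_{G(b)}(i)-1$ for all $i$ and all $b\neq 0$, where $G(b)=i$ on $U_i^{\perp}\setminus\{0\}$ and $G(b)=i_0$ elsewhere. So this partition is automatically a bent partition in $\mathcal{WRBP}$. Hence Step 4 is \emph{equivalent} to the finite-geometry statement that every partial spread of $V_n$ with $p^{\frac{n}{2}}-1$ members (deficiency $2$) extends to a spread, and no further manipulation of characters can replace it.

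Your two proposed routes do not escape this. The dual partition $\{D_{G,i}\}$ is the same kind of object: the partial spread $\{U_i^{\perp}\}$ together with its uncovered vectors. So the dual hyperplane trick only reproduces Step 2 and leaves the same problem for $D_{G,i_0}$. Deducing $T=V\cup W$ by comparing characters presupposes that $V$ and $W$ exist.

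The missing ingredient is precisely the completion theorem for deficiency-$2$ partial spreads, which the paper imports from \cite[Theorem 3.1]{B1975P}. Once it is available, $A_{i_0}+\alpha=V\cup W$ is immediate, because it is the set of vectors not covered by the $U_i\setminus\{0\}$. For $n=4$ an elementary argument exists. In a $3$-dimensional subspace containing no $U_i$, the $p+2$ uncovered projective points meet every $2$-dimensional subspace of it. By Bruen's bound on blocking sets they then contain a full uncovered line, which reduces to the easy deficiency-one case. For general $n$ you need the cited result.
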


\begin{proof}
By Theorem \ref{Theorem 1} and its proof and Remark \ref{Remark 1}, there exist functions $G: V_{n} \rightarrow V_{n/2}$ and $h: V_{n} \rightarrow \mathbb{F}_{p}$ such that
\begin{equation}\label{rth31}
	\chi_{-a}(A_{i})=p^{n/2}\delta_{0}(a)+\epsilon(\delta_{G(a)}(i)p^{n/2}-1)\zeta_{p}^{h(a)}, a \in V_{n}, i \in V_{n/2},
\end{equation}
where $\epsilon \in \{\pm 1\}$ with $\epsilon=1$ if $p \neq 3$.
Furthermore, if $p$ is odd, then $h(0)=0, h(wa)=wh(a), G(wa)=G(a), w \in \mathbb{F}_{p}^{*}, a \in V_{n}$. We first show $\epsilon=1$. We only need to consider the case of $p=3$. To the contrary, assume $\epsilon=-1$. By Eq.\eqref{co5}, $D_{G, G(0)}=\{0\}$ and $|D_{G, i}|=3^{\frac{n}{2}}+1$ for any $i \neq G(0)$. Let $u \in V_{n/2} \backslash \{G(0)\}$ be fixed. Let $a_{0} \in V_{n}^{*}$ be with $G(a_{0})=u$. By Eq.\eqref{rth31},  $|A_{u}|=3^{\frac{n}{2}}+1$ and $\chi_{-a_{0}}(A_{u})=-(3^{\frac{n}{2}}-1)\zeta_{3}^{h(a_{0})}$. Denote $n_{j}=|\{x \in A_{u}: \langle -a_{0}, x \rangle_{n}=h(a_{0})+j\}, j=0, 1, 2$. Then  $(n_{0}+3^{\frac{n}{2}}-1)\zeta_{3}^{h(a_{0})}+n_{1}\zeta_{3}^{h(a_{0})+1}+n_{2}\zeta_{3}^{h(a_{0})+2}=0$, which implies that $n_{0}+3^{\frac{n}{2}}-1=n_{1}=n_{2}$. As $n_{0}+n_{1}+n_{2}=3^{\frac{n}{2}}+1$, we obtain $n_{0}=-3^{\frac{n}{2}-1}+1<0$ as $n \geq 4$, which is impossible. Thus, $\epsilon=1$. Note that when $p=2$, $h(0)=1$ if and only if $|A_{G(0)}|=1$. By Proposition \ref{Proposition 1}, $h(0)=0$ if $p=2$ and $n \geq 6$. Therefore, we obtain $h(0)=0$. By Eq.\eqref{rth31}, we have $|A_{G(0)}|=2p^{\frac{n}{2}}-1, |A_{i}|=p^{\frac{n}{2}}-1, i \neq G(0)$. Denote $i_{0}=G(0)$. For arbitrary fixed $a, b \in V_{n}$, we show that $h(a+b)=h(a)+h(b)$ by considering the following cases.
\begin{itemize}
  \item Case 1: $a=0$ or $b=0$. In this case, obviously $h(a+b)=h(a)+h(b)$ as $h(0)=0$.
  \item Case 2: $a, b \in V_{n}^{*}$ with $a+b=0$. In this case, $h(a+b)=h(a)+h(b)$ as $h(0)=0$ and $h(-a)=-h(a)$.
  \item Case 3: $a, b\in V_{n}^{*}$ with $a+b \neq 0$ and $(G(a), G(b)) \neq (i_{0}, i_{0})$. W.l.o.g, suppose $G(a) \neq i_{0}$. By Eq.\eqref{rth31}, we have
  \begin{equation*}
  \chi_{-a}(A_{G(a)})=(p^{\frac{n}{2}}-1)\zeta_{p}^{h(a)}.
  \end{equation*}
  As $|A_{G(a)}|=p^{\frac{n}{2}}-1$, we obtain $\langle -a, x \rangle_{n}=h(a)$ for all $x \in A_{G(a)}$. Then
  \begin{equation}\label{rth32}
  \chi_{-(a+b)}(A_{G(a)})=\sum_{x \in A_{G(a)}}\zeta_{p}^{\langle -a-b, x \rangle_{n}}=\sum_{x \in A_{G(a)}}\zeta_{p}^{h(a)-\langle b, x\rangle_{n}}=\zeta_{p}^{h(a)}\chi_{-b}(A_{G(a)}).
  \end{equation}
  By Eq.\eqref{rth31} and Eq.\eqref{rth32}, we obtain $h(a+b)=h(a)+h(b)$.
  \item Case 4: $a, b \in V_{n}^{*}$ with $a+b \neq 0$ and $G(a)=G(b)=i_{0}, G(a+b) \neq i_{0}$.
  By Eq. \eqref{rth31}, we have
  \begin{equation*}
  \chi_{-(a+b)}(A_{G(a+b)})=(p^{\frac{n}{2}}-1)\zeta_{p}^{h(a+b)}.
   \end{equation*}
  As $|A_{G(a+b)}|=p^{\frac{n}{2}}-1$, we obtain $\langle -a-b, x \rangle_{n}=h(a+b)$ for all $x \in A_{G(a+b)}$. From Eq. \eqref{rth31} and
  \begin{equation*}
	\chi_{-a}(A_{G(a+b)})=\sum_{x \in A_{G(a+b)}}\zeta_{p}^{\langle -a-b+b, x\rangle_{n}}=\sum_{x \in A_{G(a+b)}}\zeta_{p}^{h(a+b)+\langle b, x \rangle_{n}}=\zeta_{p}^{h(a+b)}\chi_{b}(A_{G(a+b)}),
\end{equation*}
  we obtain $h(a)=h(a+b)+h(-b)=h(a+b)-h(b)$, that is, $h(a+b)=h(a)+h(b)$.
  \item Case 5: $a, b \in V_{n}^{*}$ with $a+b \neq 0$ and $G(a)=G(b)=G(a+b)=i_{0}$. Notice that by Cases 1-4, for any $s, t \in V_{n}$ with $(G(s), G(t), G(s+t)) \neq (i_{0}, i_{0}, i_{0})$, we have $h(s+t)=h(s)+h(t)$. We first show that there exists $c \in V_{n}$ such that $G(c) \neq i_{0}$ and $G(a+c) \neq i_{0}$. To the contrary, assume that there is no $c \in V_{n}$ such that $G(c) \neq i_{0}$ and $G(a+c) \neq i_{0}$. Then if $c \in D_{G, V_{n/2} \backslash \{i_{0}\}}$, then $a+c \in D_{G, i_{0}}$. Thus,
      $|D_{G, V_{n/2} \backslash \{i_{0}\}}| \leq |D_{G, i_{0}}|$. By Eq.\eqref{co5} and Remark \ref{Remark 1}, we have $|D_{G, i_{0}}|=2p^{\frac{n}{2}}-1, |D_{G, i}|=p^{\frac{n}{2}}-1, i \neq i_{0}$. Then $(p^{\frac{n}{2}}-1)^{2} \leq 2p^{\frac{n}{2}}-1$, which is impossible as $n \geq 4$. Hence, there is $c \in V_{n}$ with $G(-c)=G(c) \neq i_{0}$ and $G(a+c) \neq i_{0}$. Then
      \begin{equation*}
      \begin{split}
      h(a+b)&=h\big((a+c)+(-c+b)\big)=h(a+c)+h(-c+b)\\
      &=h(a)+h(c)+h(-c)+h(b)\\
      &=h(a)+h(c)-h(c)+h(b)=h(a)+h(b).
      \end{split}
      \end{equation*}
\end{itemize}
Therefore, $h$ is a linear function from $V_{n}$ to $\mathbb{F}_{p}$, and then there exists $\alpha \in V_{n}$ such that $h(x)=\langle \alpha, x \rangle_{n}$. Next we show that $-\alpha \in A_{i_{0}}$. For any $i_{1} \in V_{n/2} \backslash \{i_{0}\}$, by Eq. \eqref{rth31}, 
\begin{equation}\label{rth33}
|A_{i_{1}}+\alpha|=p^{\frac{n}{2}}-1,\chi_{-a}(A_{i_{1}}+\alpha)=p^{\frac{n}{2}}\delta_{G(a)}(i_{1})-1, a \in V_{n}^{*}.
\end{equation}
By $|D_{G, i_{1}}^{*}|=p^{\frac{n}{2}}-1$ and Eq. \eqref{rth33}, we have
\begin{equation*}
p^{n}\delta_{A_{i_{1}}+\alpha}(0)=\sum_{x \in A_{i_{1}}+\alpha}\sum_{a \in V_{n}}\zeta_{p}^{\langle -a, x \rangle_{n}}=\sum_{a \in V_{n}}\chi_{-a}(A_{i_{1}}+\alpha)=p^{\frac{n}{2}}-1+(p^{\frac{n}{2}}-1)^{2}-(p^{n}-p^{\frac{n}{2}})=0,
\end{equation*}
which implies that $0 \notin A_{i_{1}}+\alpha$ for any $i_{1} \neq i_{0}$. Therefore, $0 \in A_{i_{0}}+\alpha$, that is, $-\alpha \in A_{i_{0}}$. 

For each $i \in V_{n/2} \backslash \{i_{0}\}$, let $U_{i}=(A_{i}+\alpha) \cup \{0\}$. Then for any $i \in V_{n/2} \backslash \{i_{0}\}$, $|U_{i}|=p^{\frac{n}{2}}$, and by Eq. \eqref{rth31}, $\chi_{-a}(U_{i})=0 \text{ or } p^{\frac{n}{2}}$ for any $a \in V_{n}^{*}$. Since $|U_{i}|=p^{\frac{n}{2}}$, then $\chi_{-a}(U_{i})=p^{\frac{n}{2}}$ if and only if $a \in U_{i}^{\bot}$, where for any set $S\subseteq V_{n}$, $S^{\bot}=\{x \in V_{n}: \langle x, y \rangle_{n}=0 \text{ for all } y \in S\}$ is a linear subspace. Let $<S>$ denote the linear subspace spanned by $S$. Note that $<S>^{\bot}=S^{\bot}$. By
   \begin{equation*}
   p^{\frac{n}{2}} |U_{i}^{\bot}|=\sum_{a \in V_{n}}\chi_{-a}(U_{i})=\sum_{x \in U_{i}}\sum_{a \in V_{n}}\zeta_{p}^{-\langle a, x\rangle_{n}}=p^n,
  \end{equation*}
   we have that the dimension of $U_{i}^{\bot}$ is $\frac{n}{2}$. Thus, the dimension of $<U_{i}>$ is $\frac{n}{2}$, and by $|U_{i}|=p^{\frac{n}{2}}$, $U_{i}$ is an $\frac{n}{2}$-dimensional linear subspace. For any $i \neq j \in V_{n/2}\backslash \{i_{0}\}$, we have
  $U_{i} \cap U_{j}=\{0\}$ as $A_{i} \cap A_{j}=\emptyset$. Hence, $\{U_{i}, i \in V_{n/2} \backslash \{i_{0}\}\}$ is a partial spread of $V_{n}$. By \cite[Theorem 3.1]{B1975P}, 
  $\{U_{i}, i \in V_{n/2} \backslash \{i_{0}\}\}$ can be extended to a spread $\{V, W, U_{i}, i \in V_{n/2} \backslash \{i_{0}\}\}$. Then $A_{i}=(U_{i}-\alpha) \backslash \{-\alpha\}, i \in V_{n/2} \backslash \{i_{0}\}$ and $A_{i_{0}}=(V \cup W)-\alpha$.
\end{proof}

By Theorem \ref{RevisedTheorem 3}, we show that all normal bent partitions of $V_{n}$ of depth $p^{\frac{n}{2}}$ in class $\mathcal{WRBP}$ are from spreads, where $n \geq 4$.

\begin{corollary}\label{Corollary 2}
Let $n \geq 4$ be an even integer. Let $\Omega=\{U, A_{i}, i \in V_{n/2}\}$ be a normal bent partition of $V_{n}$ in class $\mathcal{WRBP}$, where $U$ is an $n/2$-dimensional affine subspace of $V_{n}$. Then there exists $\gamma \in U$ such that $A_{i}=(U_{i}+\gamma) \backslash \{\gamma\}, i \in V_{n/2}$, where $\{U-\gamma, U_{i}, i \in V_{n/2}\}$ is a spread of $V_{n}$.
\end{corollary}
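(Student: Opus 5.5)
Take any $i_{1} \in V_{n/2}$ and form $\Gamma=\{U \cup A_{i_{1}}, A_{i}, i \neq i_{1}\}$, the bent partition arising from $\Omega$. Every function generated by $\Gamma$ is also generated by $\Omega$, since it is constant on $U$. Hence $\Gamma$ is a bent partition of $V_{n}$ of depth $p^{n/2}$ in class $\mathcal{WRBP}$. The plan is to apply Theorem \ref{RevisedTheorem 3} to $\Gamma$ and then identify which of the two spread elements inside the special block is $U$. There is one hypothesis of Theorem \ref{RevisedTheorem 3} to verify. When $(p,n)=(2,4)$, no block of $\Gamma$ may have size $1$. The block $U\cup A_{i_{1}}$ has size at least $|U|=4>1$. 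Since all blocks other than the special one have equal size, a block of size $1$ could only be a block $A_{i}$ with $i \neq i_{1}$. I would rule this out by varying $i_{1}$: choose $i_{1}$ so that it is the index of a smallest $A_{i}$. Alternatively, note directly that the two possible size patterns ($1,5,5,5$ versus $7,3,3,3$) force the large block to be $U\cup A_{i_{1}}$, because $|U\cup A_{i_{1}}|\ge 4$.

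Applying Theorem \ref{RevisedTheorem 3} to $\Gamma$ gives $\alpha$, an index $j_{0}$, and a spread $\{V,W,U_{i}, i\neq j_{0}\}$ with $A_{i}$ or the merged block equal to $(U_{i}-\alpha)\setminus\{-\alpha\}$ for $i\neq j_{0}$, and the $j_{0}$ block equal to $(V\cup W)-\alpha$. The sizes are $2p^{n/2}-1$ for the special block and $p^{n/2}-1$ for the others. The merged block $U\cup A_{i_{1}}$ has at least $p^{n/2}$ elements, so it must be the special one, i.e. $j_{0}=i_{1}$. Thus $U\cup A_{i_{1}}=(V\cup W)-\alpha$, and $A_{i}=(U_{i}-\alpha)\setminus\{-\alpha\}$ for all $i\neq i_{1}$.

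Set $\gamma=-\alpha$. Next I would show $\gamma\in U$ and that $U-\gamma\in\{V,W\}$. Now $U+\alpha$ is an affine $n/2$-dimensional subspace contained in $V\cup W$, and $V\cap W=\{0\}$. The main obstacle is showing that an affine subspace of dimension $n/2$ lying inside a union of two $n/2$-dimensional linear subspaces meeting trivially must equal one of them. My approach is to write $U+\alpha=u+L$ with $L$ linear of dimension $n/2$. Each line $u+\langle \ell\rangle$ in it meets $V$ and $W$ each in at most one point unless it lies in $V$ or in $W$. Since $p\ge2$ and a line has $p$ points, a line can meet $V\cup W$ in all $p$ points with at most one point in each only if $p=2$. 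For $p\ge3$ every line lies entirely in $V$ or $W$, so $u+L$ lies in one of them, and by dimension it equals it. For $p=2$ I would instead count sizes: $U+\alpha$ has $2^{n/2}$ points. If it met both $V$ and $W$ in proper subsets, then $(U+\alpha)\cap V$ and $(U+\alpha)\cap W$ would be affine subspaces of size at most $2^{n/2-1}$ each, and these two sets partition $U+\alpha$. Hence both have size $2^{n/2-1}$, which makes them a pair of parallel hyperplane cosets of $u+L$. One coset must then pass through $0$, so it is a linear subspace. The other coset is then $L$ minus that subspace, i.e. a set not containing $0$. This contradicts its being an affine subspace of $W$ disjoint from $0$ only if it does contain $0$, so I would instead check directly with the sums of two elements. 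This $p=2$ case is where I expect the real work to lie.

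Once $U+\alpha=V$ (say), we get $\gamma=-\alpha\in U$, since $0\in V$. Then $A_{i_{1}}=((V\cup W)-\alpha)\setminus U=(W-\alpha)\setminus\{-\alpha\}=(W+\gamma)\setminus\{\gamma\}$. Setting $U_{i_{1}}=W$ and $U_{i}$ as before gives $A_{i}=(U_{i}+\gamma)\setminus\{\gamma\}$ for all $i$, with $\{U-\gamma,U_{i}\}$ a spread.
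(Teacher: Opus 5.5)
Your route is different from the paper's and can be completed, but as written it has two gaps.

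The main gap is the case $p=2$ of your lemma: an $\frac{n}{2}$-dimensional affine subspace $U+\alpha=u+L$ contained in $V\cup W$ must equal $V$ or $W$. You leave this case open, and the step ``one coset must then pass through $0$'' is unfounded. If the pieces $(U+\alpha)\cap V$ and $(U+\alpha)\cap W$ are disjoint, then $0\notin U+\alpha$, so neither piece contains $0$.

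The missing idea is that the two pieces are complementary cosets of the \emph{same} hyperplane $H$ of $L$. Suppose $(U+\alpha)\cap V=v+H$ with $|H|=2^{\frac{n}{2}-1}$. Its complement in $v+L$ is $v+\ell+H$ for any $\ell\in L\setminus H$, and this must be $(U+\alpha)\cap W$. Since $v\in V$ and $v+H\subseteq V$, you get $H\subseteq V$. Likewise $v+\ell\in W$ and $v+\ell+H\subseteq W$ give $H\subseteq W$. So $H\subseteq V\cap W=\{0\}$, contradicting $\dim H=\frac{n}{2}-1\geq 1$.

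This is exactly where $n\geq 4$ is needed. In $\mathbb{F}_{2}^{2}$ the affine line $\{e_{1},e_{2}\}\subseteq\langle e_{1}\rangle\cup\langle e_{2}\rangle$ is a counterexample. If instead $0\in U+\alpha$, two proper pieces would have total size $2^{\frac{n}{2}}+1>2\cdot 2^{\frac{n}{2}-1}$, which is impossible. The same count settles $p\geq 3$ at once: two proper affine subspaces of $u+L$ cover at most $2p^{\frac{n}{2}-1}<p^{\frac{n}{2}}$ points, so you do not need the line argument.

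The second gap is the hypothesis check for $(p,n)=(2,4)$. Choosing $i_{1}$ with $|A_{i_{1}}|$ smallest does not exclude the size pattern $(1,5,5,5)$. In that pattern the merged block has size $5$, so $|A_{i_{1}}|=1$ is automatically a smallest block, while another $A_{j}$ still has size $1$. Your alternative remark only shows that $U\cup A_{i_{1}}$ is a large block, not that no block of size $1$ exists. Choose $i_{1}$ with $|A_{i_{1}}|$ \emph{largest} instead: the pattern $(1,5,5,5)$ would then force $|A_{i_{1}}|=1$ while two other $A_{k}$ have size $5$. Alternatively, cite, as the paper does, that every $A_{i}$ of a normal bent partition in $\mathcal{WRBP}$ has size $p^{\frac{n}{2}}-1$.

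For comparison, the paper never needs your affine-geometry lemma. It applies Theorem \ref{RevisedTheorem 3} to two merged partitions $\Gamma_{1},\Gamma_{2}$. Comparing the character values on a common block $A_{i}$ with $i\neq i_{1},i_{2}$ gives $h=h'$, so the shift $-\alpha$ lies in $(U\cup A_{i_{1}})\cap(U\cup A_{i_{2}})=U$. It then reads off from the proof of that theorem that every $(A_{i}+\alpha)\cup\{0\}$ is an $\frac{n}{2}$-dimensional subspace. Your version has the advantage of using Theorem \ref{RevisedTheorem 3} only through its statement, at the cost of that extra lemma.
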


\begin{proof}
Since $\Omega$ is a normal bent partition in class $\mathcal{WRBP}$, then $|A_{i}|=p^{\frac{n}{2}}-1$ \cite{AM2022Be}. Let $\Gamma_{1}=\{U \cup A_{i_{1}}, A_{i}, i \in V_{n/2} \backslash \{i_{1}\}\}, \Gamma_{2}=\{U \cup A_{i_{2}}, A_{i}, i \in V_{n/2} \backslash \{i_{2}\}\}$, where $i_{1} \neq i_{2} \in V_{n/2}$. Then $\Gamma_{1}$ and $\Gamma_{2}$ are both bent partitions satisfying the conditions in Theorem \ref{RevisedTheorem 3}. For $\Gamma_{1}$, by Theorem \ref{RevisedTheorem 3} and its proof, there is $G: V_{n} \rightarrow V_{n/2}$ and linear function $h(x)=\langle \alpha, x\rangle_{n}$ (where $-\alpha \in U \cup A_{i_{1}}$) such that for $i \neq i_{1}, i_{2}$,
\begin{equation}\label{co21}
\chi_{-a}(A_{i})=p^{n/2}\delta_{0}(a)+(\delta_{G(a)}(i)p^{n/2}-1)\zeta_{p}^{h(a)}, a \in V_{n}.
\end{equation}
Also for $\Gamma_{2}$, there is $G': V_{n} \rightarrow V_{n/2}$ and linear function $h'(x)=\langle \alpha', x\rangle_{n}$ (where $-\alpha' \in U \cup A_{i_{2}}$) such that for $i \neq i_{1}, i_{2}$,
\begin{equation}\label{co22}
	\chi_{-a}(A_{i})=p^{n/2}\delta_{0}(a)+(\delta_{G'(a)}(i)p^{n/2}-1)\zeta_{p}^{h'(a)}, a \in V_{n}.
\end{equation}
By Eqs. \eqref{co21} and \eqref{co22}, we have $h=h'$ and then $-\alpha=-\alpha' \in (U \cup A_{i_{1}}) \cap (U \cup A_{i_{2}})=U$. Denote $\gamma=-\alpha$. Then $U-\gamma$ is an $\frac{n}{2}$-dimensional linear subspace. For $\Gamma_{1}$, by Theorem \ref{RevisedTheorem 3} and its proof, $U_{i} \triangleq (A_{i}-\gamma) \cup \{0\}, i \neq i_{1},$ are all $\frac{n}{2}$-dimensional linear subspaces. For $\Gamma_{2}$, by Theorem \ref{RevisedTheorem 3} and its proof, $U_{i_{1}} \triangleq (A_{i_{1}}-\gamma) \cup \{0\}$ is an $\frac{n}{2}$-dimensional linear subspace. Obviously, $U_{i} \cap U_{j}=\{0\}, i \neq j, U_{i} \cap (U-\gamma)=\{0\}$. Thus, $\{U-\gamma, U_{i}, i \in V_{n/2}\}$ is a spread.
\end{proof}

\section{Macwilliams-like identities for bent partitions in class $\mathcal{WRBP}$}
In this section, we give MacWilliams-like identities for bent partitions in class $\mathcal{WRBP}$. By using the obtained MacWilliams-like identities, some nonexistence results on bent partitions in class $\mathcal{WRBP}$ will be presented.

\begin{theorem}\label{Theorem 4}
For any $C \subseteq \mathbb{F}_{p}^{n}$, denote $W_{C}(x, y)\triangleq \sum_{a \in C}x^{n-wt(a)}y^{wt(a)}$. Let $\Gamma=\{A_{i}, i \in \mathbb{F}_{p}^{m}\}$ be a bent partition of $\mathbb{F}_{p}^{n}$ in class $\mathcal{WRBP}$, where $n$ is even, $m \leq \frac{n}{2}$, and $m\geq 2$ if $p=2$. Then there exist functions $G: \mathbb{F}_{p}^{n} \rightarrow \mathbb{F}_{p}^{m}$ and $h: \mathbb{F}_{p}^{n} \rightarrow \mathbb{F}_{p}$ such that for each $i \in \mathbb{F}_{p}^{m}$,
\begin{equation}\label{th40}
	\begin{split}
		&W_{A_{i}}(x+(p-1)y, x-y)\\
		&=p^{n-m}x^{n}+\epsilon p^{\frac{n}{2}-m}(p-1)^{-1}(x+(p-1)y)^{n}+\epsilon p^{\frac{n}{2}+1}(p-1)^{-1}W_{D_{G, i}\cap D_{h, 0}}(x, y)\\
		& \ \ \ -\epsilon p^{\frac{n}{2}}(p-1)^{-1}W_{D_{G, i}}(x, y)-\epsilon p^{\frac{n}{2}-m+1}(p-1)^{-1}W_{D_{h, 0}}(x, y),
	\end{split}
\end{equation}
where $\epsilon \in \{\pm 1\}$ with $\epsilon=1$ if $p \neq 3$.
\end{theorem}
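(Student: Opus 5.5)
The plan is to prove a MacWilliams-type transform for arbitrary subsets of $\mathbb{F}_{p}^{n}$, and then substitute the explicit character values of the blocks $A_{i}$ that were obtained in the proof of Theorem \ref{Theorem 1} and in Remark \ref{Remark 1}.

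\textbf{Step 1 (generic identity).} For any $A\subseteq\mathbb{F}_{p}^{n}$ I will show
\begin{equation*}
W_{A}(x+(p-1)y,\,x-y)=\sum_{a\in\mathbb{F}_{p}^{n}}\chi_{a}(A)\,x^{n-wt(a)}y^{wt(a)} .
\end{equation*}
To see this, write the right-hand side as $\sum_{u\in A}\prod_{j=1}^{n}\bigl(x+y\sum_{a_{j}\in\mathbb{F}_{p}^{*}}\zeta_{p}^{a_{j}u_{j}}\bigr)$. The $j$-th factor equals $x+(p-1)y$ if $u_{j}=0$ and $x-y$ if $u_{j}\neq0$. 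Hence the product is $(x+(p-1)y)^{n-wt(u)}(x-y)^{wt(u)}$, and summing over $u\in A$ gives the left-hand side.

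Since $wt(-a)=wt(a)$, we may sum $\chi_{-a}(A)$ instead of $\chi_{a}(A)$. The left-hand side has rational coefficients. So for odd $p$ it is fixed by every $\sigma_{w}$, $w\in\mathbb{F}_{p}^{*}$, and we may replace the right-hand side by $\frac{1}{p-1}\sum_{w\in\mathbb{F}_{p}^{*}}\sigma_{w}(\cdot)$ applied coefficientwise.

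\textbf{Step 2 (plug in the spectrum).} By Theorem \ref{Theorem 1}, its proof, and Remark \ref{Remark 1} for $p=2$, there exist $G$ and $h$ with
\begin{equation*}
\chi_{-a}(A_{i})=p^{n-m}\delta_{0}(a)+\epsilon p^{\frac{n}{2}-m}\zeta_{p}^{h(a)}\bigl(p^{m}\delta_{G(a)}(i)-1\bigr),\qquad a\in\mathbb{F}_{p}^{n},
\end{equation*}
where $\epsilon$ is as in the statement. Averaging over the Galois group uses
\begin{equation*}
\frac{1}{p-1}\sum_{w\in\mathbb{F}_{p}^{*}}\zeta_{p}^{wh(a)}=\frac{p\,\delta_{0}(h(a))-1}{p-1}.
\end{equation*}
For $p=2$ no averaging is needed, because $(-1)^{h(a)}=2\delta_{0}(h(a))-1$ gives the same expression with $p-1=1$. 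Therefore
\begin{equation*}
W_{A_{i}}(x+(p-1)y,x-y)=p^{n-m}x^{n}+\frac{\epsilon p^{\frac{n}{2}-m}}{p-1}\sum_{a}\bigl(p\,\delta_{D_{h,0}}(a)-1\bigr)\bigl(p^{m}\delta_{D_{G,i}}(a)-1\bigr)x^{n-wt(a)}y^{wt(a)} .
\end{equation*}

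\textbf{Step 3 (expand).} Expanding the product gives four terms:
\begin{equation*}
p^{m+1}W_{D_{G,i}\cap D_{h,0}}(x,y)-p^{m}W_{D_{G,i}}(x,y)-p\,W_{D_{h,0}}(x,y)+W_{\mathbb{F}_{p}^{n}}(x,y),
\end{equation*}
with $W_{\mathbb{F}_{p}^{n}}(x,y)=(x+(p-1)y)^{n}$ by the binomial theorem. Multiplying by $\epsilon p^{\frac{n}{2}-m}(p-1)^{-1}$ yields exactly the five terms of Eq. \eqref{th40}.

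The only delicate point is bookkeeping. One must use the character formula for $\chi_{-a}$ rather than $\chi_{a}$, so that $G$ and $h$ are exactly the functions from Theorem \ref{Theorem 1} (the sign is harmless by invariance of $wt$). One must also check that the $a=0$ term is handled correctly: the $\delta_{0}(a)$ part gives $p^{n-m}x^{n}$, and $a=0$ is otherwise included in the sums with its $G(0)$ and $h(0)$ values. Beyond this I expect no real obstacle.
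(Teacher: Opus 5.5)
Your proof is correct, but it removes the non-rational factors $\zeta_p^{h(a)}$ by a different mechanism than the paper.

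\textbf{The paper's route.} It first splits off $a=0$. It then uses the scalar invariances $G(ca)=G(a)$ and $h(ca)=ch(a)$, $c\in\mathbb{F}_p^{*}$ (Eq.~\eqref{th14}, from the proof of Theorem~\ref{Theorem 1}), to group the nonzero $a$ into projective points. On each such line it evaluates $\sum_{c\in\mathbb{F}_p^{*}}\zeta_p^{ch(a)}=p\delta_0(h(a))-1$ and multiplies back by $(p-1)^{-1}$. This produces correction terms in $\delta_{G(0)}(i)$ and $\delta_{h(0)}(0)$, which must be cancelled against $|A_i|$ using $h(0)=0$ for odd $p$. The transform itself is quoted from \cite[Lemma 2.1]{CHK2014Lo}.

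\textbf{Your route.} You note that each coefficient $\sum_{wt(a)=j}\chi_{-a}(A_i)$ is rational, and average it over $\mathrm{Gal}(\mathbb{Q}(\zeta_p)/\mathbb{Q})$.

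\textbf{Comparison.} Your averaging needs neither the scalar invariances of $G,h$ nor the value of $h(0)$. It treats $a=0$ like every other $a$, and it covers $p=2$, where $h(0)=1$ can occur, without a separate case. You also prove the MacWilliams-type transform directly instead of citing it. The two computations are close cousins, since the invariances in Eq.~\eqref{th14} were themselves derived from the Galois action. Yours is the more economical bookkeeping, while the paper's makes the sum visibly one over projective points.
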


\begin{proof}
Let $i \in \mathbb{F}_{p}^{m}$ be fixed. By Theorem \ref{Theorem 1} and its proof and Remark \ref{Remark 1}, for any $a \in \mathbb{F}_{p}^{n}$,
\begin{equation*}
\chi_{-a}(A_{i})=p^{n-m}\delta_{0}(a)+\epsilon p^{\frac{n}{2}-m}\zeta_{p}^{h(a)}(p^m\delta_{G(a)}(i)-1),
\end{equation*}
for some functions $G: \mathbb{F}_{p}^{n} \rightarrow \mathbb{F}_{p}^{m}$ and $h: \mathbb{F}_{p}^{n} \rightarrow \mathbb{F}_{p}$ with $G(ca)=G(a), h(ca)=ch(a), c \in \mathbb{F}_{p}^{*}$, where $\epsilon \in \{\pm 1\}$ with $\epsilon=1$ if $p \neq 3$. Then we have
\begin{equation*}
\begin{split}
&\sum_{a \in {\mathbb{F}_{p}^{n}}^{*}}\chi_{a}(A_{i})z^{wt(a)}\\
&=\sum_{a \in {\mathbb{F}_{p}^{n}}^{*}}\chi_{-a}(A_{i})z^{wt(-a)}\\
&=\sum_{a \in {\mathbb{F}_{p}^{n}}^{*}}\chi_{-a}(A_{i})z^{wt(a)}\\
&=\epsilon p^{\frac{n}{2}-m}\sum_{a \in {\mathbb{F}_{p}^{n}}^{*}}\zeta_{p}^{h(a)}(p^m \delta_{G(a)}(i)-1)z^{wt(a)}\\
&=\epsilon p^{\frac{n}{2}}\sum_{\substack{a \in {\mathbb{F}_{p}^{n}}^{*}\\G(a)=i}}\zeta_{p}^{h(a)}z^{wt(a)}-\epsilon p^{\frac{n}{2}-m}\sum_{a \in {\mathbb{F}_{p}^{n}}^{*}}\zeta_{p}^{h(a)}z^{wt(a)}\\
&=\epsilon p^{\frac{n}{2}}\sum_{\substack{\overline{a} \in {\mathbb{F}_{p}^{n}}^{*}/\mathbb{F}_{p}^{*}, c \in \mathbb{F}_{p}^{*}\\G(ca)=i}}\zeta_{p}^{h(ca)}z^{wt(ca)}-\epsilon p^{\frac{n}{2}-m}\sum_{\overline{a} \in {\mathbb{F}_{p}^{n}}^{*}/\mathbb{F}_{p}^{*}, c \in \mathbb{F}_{p}^{*}}\zeta_{p}^{h(ca)}z^{wt(ca)}\\
&=\epsilon p^{\frac{n}{2}}\sum_{\substack{\overline{a} \in {\mathbb{F}_{p}^{n}}^{*}/\mathbb{F}_{p}^{*}\\G(a)=i}}z^{wt(a)}\sum_{c \in \mathbb{F}_{p}^{*}}\zeta_{p}^{ch(a)}-\epsilon p^{\frac{n}{2}-m}\sum_{\overline{a} \in {\mathbb{F}_{p}^{n}}^{*}/\mathbb{F}_{p}^{*}}z^{wt(a)}\sum_{c \in \mathbb{F}_{p}^{*}}\zeta_{p}^{ch(a)}\\
&=\epsilon p^{\frac{n}{2}}\sum_{\substack{\overline{a} \in {\mathbb{F}_{p}^{n}}^{*}/\mathbb{F}_{p}^{*}\\G(a)=i}}z^{wt(a)} (p\delta_{0}(h(a))-1)-\epsilon p^{\frac{n}{2}-m}\sum_{\overline{a} \in {\mathbb{F}_{p}^{n}}^{*}/\mathbb{F}_{p}^{*}}z^{wt(a)}(p\delta_{0}(h(a))-1)\\
\end{split}
\end{equation*}
\begin{equation*}
\begin{split}
&=\epsilon p^{\frac{n}{2}+1}\sum_{\substack{\overline{a} \in {\mathbb{F}_{p}^{n}}^{*}/\mathbb{F}_{p}^{*}\\G(a)=i, h(a)=0}}z^{wt(a)}-\epsilon p^{\frac{n}{2}}\sum_{\substack{\overline{a} \in {\mathbb{F}_{p}^{n}}^{*}/\mathbb{F}_{p}^{*}\\G(a)=i}}z^{wt(a)}-\epsilon p^{\frac{n}{2}-m+1}\sum_{\substack{\overline{a} \in {\mathbb{F}_{p}^{n}}^{*}/\mathbb{F}_{p}^{*}\\h(a)=0}}z^{wt(a)}+\epsilon p^{\frac{n}{2}-m}\sum_{\overline{a} \in {\mathbb{F}_{p}^{n}}^{*}/\mathbb{F}_{p}^{*}}z^{wt(a)}\\
&=\epsilon p^{\frac{n}{2}+1}(p-1)^{-1}\sum_{\substack{a \in {\mathbb{F}_{p}^{n}}^{*}\\G(a)=i, h(a)=0}}z^{wt(a)}-\epsilon p^{\frac{n}{2}}(p-1)^{-1}\sum_{\substack{a \in {\mathbb{F}_{p}^{n}}^{*}\\G(a)=i}}z^{wt(a)}\\
& \ \ \ -\epsilon p^{\frac{n}{2}-m+1}(p-1)^{-1}\sum_{\substack{a \in {\mathbb{F}_{p}^{n}}^{*}\\h(a)=0}}z^{wt(a)}+\epsilon p^{\frac{n}{2}-m}(p-1)^{-1}\sum_{a \in {\mathbb{F}_{p}^{n}}^{*}}z^{wt(a)}\\
&=\epsilon p^{\frac{n}{2}+1}(p-1)^{-1}\sum_{a \in D_{G, i} \cap D_{h, 0}}z^{wt(a)}-\epsilon p^{\frac{n}{2}}(p-1)^{-1}\sum_{a \in D_{G, i}}z^{wt(a)}-\epsilon p^{\frac{n}{2}-m+1}(p-1)^{-1}\sum_{a \in D_{h, 0}}z^{wt(a)}\\
& \ \ \ +\epsilon p^{\frac{n}{2}-m}(p-1)^{-1}(1+(p-1)z)^{n}-\epsilon p^{\frac{n}{2}+1}(p-1)^{-1}\delta_{G(0)}(i)\delta_{h(0)}(0)+\epsilon p^{\frac{n}{2}}(p-1)^{-1}\delta_{G(0)}(i)\\
& \ \ \ +\epsilon p^{\frac{n}{2}-m+1}(p-1)^{-1}\delta_{h(0)}(0)-\epsilon p^{\frac{n}{2}-m}(p-1)^{-1}.
\end{split}
\end{equation*}
Note that $h(0)=0$ if $p$ is odd. Therefore,
\begin{equation}\label{th41}
\begin{split}
&\sum_{a \in \mathbb{F}_{p}^{n}}\chi_{a}(A_{i})z^{wt(a)}\\
&=|A_{i}|+\sum_{a \in {\mathbb{F}_{p}^{n}}^{*}}\chi_{a}(A_{i})z^{wt(a)}\\
&=p^{n-m}+\epsilon p^{\frac{n}{2}+1}(p-1)^{-1}\sum_{a \in D_{G, i} \cap D_{h, 0}}z^{wt(a)}-\epsilon p^{\frac{n}{2}}(p-1)^{-1}\sum_{a \in D_{G, i}}z^{wt(a)}
\\
& \ \ \ -\epsilon p^{\frac{n}{2}-m+1}(p-1)^{-1}\sum_{a \in D_{h, 0}}z^{wt(a)}+\epsilon p^{\frac{n}{2}-m}(p-1)^{-1}(1+(p-1)z)^{n}.
\end{split}
\end{equation}
Set $z=\frac{y}{x}$, by Eq. \eqref{th41}, we obtain
\begin{equation}\label{th42}
\begin{split}
&\sum_{a \in \mathbb{F}_{p}^{n}}\chi_{a}(A_{i})x^{n-wt(a)}y^{wt(a)}\\
&=p^{n-m}x^{n}+\epsilon p^{\frac{n}{2}-m}(p-1)^{-1}(x+(p-1)y)^{n}+\epsilon p^{\frac{n}{2}+1}(p-1)^{-1}W_{D_{G, i}\cap D_{h, 0}}(x, y)\\
& \ \ \ -\epsilon p^{\frac{n}{2}}(p-1)^{-1}W_{D_{G, i}}(x, y)-\epsilon p^{\frac{n}{2}-m+1}(p-1)^{-1}W_{D_{h, 0}}(x, y).
\end{split}
\end{equation}
By \cite[Lemma 2.1]{CHK2014Lo} (see also \cite[Lemma 3.1]{HLL2014Ne}), for any subset $C$ of $\mathbb{F}_{p}^{n}$,
\begin{equation}\label{th43}
W_{C}(x+(p-1)y, x-y)=\sum_{a \in \mathbb{F}_{p}^{n}}\chi_{a}(C)x^{n-wt(a)}y^{wt(a)}.
\end{equation}
Combining Eq. \eqref{th42} and Eq. \eqref{th43}, we have that Eq. \eqref{th40} holds.
\end{proof}

By Theorem \ref{Theorem 4}, we obtain the following corollary.

\begin{corollary}
Keep the same notations as in Theorem \ref{Theorem 4}. For any $C \subseteq \mathbb{F}_{p}^{n}$, denote $M_{j}(C)=|\{a \in C: wt(a)=j\}|$, where $0 \leq j \leq n$. For each $i \in \mathbb{F}_{p}^{m}$ and $1 \leq j \leq n$,
\begin{equation}\label{co0}
	\begin{split}
		M_{j}(A_{i})&=\binom{n}{j}p^{-m}(p-1)^{j}+\epsilon p^{-\frac{n}{2}+1}(p-1)^{-1}\sum_{t=0}^{n}M_{t}(D_{G, i} \cap D_{h, 0})K_{j}(t)\\
		& \ \ \ -\epsilon p^{-\frac{n}{2}}(p-1)^{-1}\sum_{t=0}^{n}M_{t}(D_{G, i})K_{j}(t)-\epsilon p^{-\frac{n}{2}-m+1}(p-1)^{-1}\sum_{t=0}^{n}M_{t}(D_{h, 0})K_{j}(t),
	\end{split}
\end{equation}
where $K_{j}(t)=\sum_{k=0}^{j}\binom{t}{k}\binom{n-t}{j-k}(-1)^{k}(p-1)^{j-k}$ is the Krawtchouk number.
\end{corollary}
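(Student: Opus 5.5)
The plan is to read off coefficients from the identity \eqref{th40} of Theorem \ref{Theorem 4}. Set $x=1$ and $y=z$ as a formal variable. The left side becomes $W_{A_i}(1+(p-1)z,1-z)=\sum_{a\in A_i}(1+(p-1)z)^{n-wt(a)}(1-z)^{wt(a)}$. It is more convenient to invert the transform instead, and we use the standard involutive property of the MacWilliams transform: applying the substitution $(x,y)\mapsto (x+(p-1)y,\,x-y)$ twice gives $(px,py)$. A homogeneous degree-$n$ polynomial $P$ therefore satisfies $P(x+(p-1)y,x-y)\big|_{(x,y)\mapsto(x+(p-1)y,x-y)}=P(px,py)=p^nP(x,y)$.

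\textbf{Step 1.} Substitute $(x,y)\to(x+(p-1)y,x-y)$ in \eqref{th40} and divide by $p^n$. This gives $W_{A_i}(x,y)$ on the left. On the right, the four transformed terms are handled as follows.
\begin{itemize}
\item $x^n$ becomes $(x+(p-1)y)^n$, contributing $p^{-m}(x+(p-1)y)^n$.
\item $(x+(p-1)y)^n$ becomes $(px)^n$, contributing $\epsilon p^{\frac{n}{2}-m}(p-1)^{-1}x^n$.
\item Each $W_C(x,y)=\sum_t M_t(C)x^{n-t}y^t$ becomes $\sum_t M_t(C)(x+(p-1)y)^{n-t}(x-y)^t$, applied to $C=D_{G,i}\cap D_{h,0}$, $D_{G,i}$ and $D_{h,0}$, with the prefactors $\epsilon p^{\frac n2+1-n}$, $\epsilon p^{\frac n2-n}$ and $\epsilon p^{\frac n2-m+1-n}$, each times $(p-1)^{-1}$.
\end{itemize}
These prefactors match the exponents $-\frac n2+1$, $-\frac n2$ and $-\frac n2-m+1$ in \eqref{co0}.

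\textbf{Step 2.} Extract the coefficient of $x^{n-j}y^j$ for $1\le j\le n$. We have $[x^{n-j}y^j](x+(p-1)y)^{n-t}(x-y)^t=\sum_k\binom{t}{k}(-1)^k\binom{n-t}{j-k}(p-1)^{j-k}=K_j(t)$, with the convention $\binom{j}{i}=0$ for $i>j$ handling the boundary terms. The first term gives $p^{-m}\binom{n}{j}(p-1)^j$. The $x^n$ term contributes only to $j=0$, so it vanishes for $j\ge1$, which is why the range excludes $j=0$. Collecting the terms gives exactly \eqref{co0}.

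The only real obstacle is bookkeeping: verifying the involution identity, which is immediate since the substitution matrix $\begin{pmatrix}1&p-1\\1&-1\end{pmatrix}$ squares to $pI$, and making sure the $x^n$ term drops out for $j\ge 1$. Alternatively, \eqref{th43} together with orthogonality $\sum_a\chi_a(C)\chi_{-a}(b)=p^n\delta_C(b)$ could be used directly, but the inversion route is cleaner.
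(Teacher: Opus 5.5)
Your proposal is correct and is essentially the paper's argument. The paper makes the inverse change of variables $x=\frac{1}{p}(X+(p-1)Y)$, $y=\frac{1}{p}(X-Y)$ in \eqref{th40}, which by homogeneity is exactly your substitution followed by division by $p^{n}$; it then sets $X=1$ and compares coefficients of $Y^{j}$, which matches your extraction of the coefficient of $x^{n-j}y^{j}$, with the same prefactors, Krawtchouk coefficients, and vanishing of the $X^{n}$ term for $j\geq 1$.
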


\begin{proof}
 Set $X=x+(p-1)y, Y=x-y$. Then $x=\frac{1}{p}(X+(p-1)Y), y=\frac{1}{p}(X-Y)$. Let $i \in \mathbb{F}_{p}^{m}$ be fixed. By Eq. \eqref{th40}, we have
\begin{equation}\label{co1}
\begin{split}
W_{A_{i}}(X, Y)&=p^{-m}(X+(p-1)Y)^{n}+\epsilon p^{\frac{n}{2}-m}(p-1)^{-1}X^{n}\\
& \ \ \ +\epsilon p^{-\frac{n}{2}+1}(p-1)^{-1}W_{D_{G, i} \cap D_{h, 0}}(X+(p-1)Y, X-Y)\\
& \ \ \ -\epsilon p^{-\frac{n}{2}}(p-1)^{-1}W_{D_{G, i}}(X+(p-1)Y, X-Y)\\
& \ \ \ -\epsilon p^{-\frac{n}{2}-m+1}(p-1)^{-1}W_{D_{h, 0}}(X+(p-1)Y, X-Y).
\end{split}
\end{equation}
By setting $X=1$ in Eq. \eqref{co1}, we obtain
\begin{equation}\label{co2}
\begin{split}
&\sum_{j=0}^{n}M_{j}(A_{i})Y^{j}\\
&=p^{-m}(1+(p-1)Y)^{n}+\epsilon p^{\frac{n}{2}-m}(p-1)^{-1}\\
& \ \ \ +\epsilon p^{-\frac{n}{2}+1}(p-1)^{-1}\sum_{j=0}^{n}M_{j}(D_{G, i} \cap D_{h, 0})(1+(p-1)Y)^{n-j}(1-Y)^{j}\\
& \ \ \ -\epsilon p^{-\frac{n}{2}}(p-1)^{-1}\sum_{j=0}^{n}M_{j}(D_{G, i})(1+(p-1)Y)^{n-j}(1-Y)^{j}\\
& \ \ \ -\epsilon p^{-\frac{n}{2}-m+1}(p-1)^{-1}\sum_{j=0}^{n}M_{j}(D_{h, 0})(1+(p-1)Y)^{n-j}(1-Y)^{j}.
\end{split}
\end{equation}
For $1 \leq j \leq n$ and $0 \leq t \leq n$, the coefficient of $Y^{j}$ in the expansion of $(1+(p-1)Y)^{n-t}(1-Y)^{t}$ is the Krawtchouk number $K_{j}(t)=\sum_{k=0}^{j}\binom{t}{k}\binom{n-t}{j-k}(-1)^{k}(p-1)^{j-k}$. By comparing the coefficient of $Y^{j}$ on both sides of Eq. \eqref{co2}, we have that Eq. \eqref{co0} holds.
\end{proof}

When $p$ is odd, based on Theorem \ref{Theorem 4}, we present some nonexistence results on bent partitions in class $\mathcal{WRBP}$.
\begin{theorem} \label{Theorem 5}
Let $p$ be an odd prime, $k$ be a nonnegative integer, $m$ be a positive integer, and let $\epsilon \in \{\pm 1\}$ be a constant with $\epsilon=1$ if $p \neq 3$. When $\epsilon=1$, let $n_{k, p, m}$ be the smallest even positive integer satisfying
\begin{equation*}
\binom{\frac{n_{k, p, m}}{2}+k+1}{k+1}<\frac{p(p-2)(p^m-1)-1}{p^m(p-1)}\Big(\frac{p}{p-1}\Big)^{\frac{n_{k, p, m}}{2}}-(p-1),
\end{equation*}
and when $\epsilon=-1$ (in this case, $p=3$), let $n_{k, 3, m}$ be the smallest even positive integer satisfying
\begin{equation*}
\binom{\frac{n_{k, 3, m}}{2}+k+1}{k+1}<\frac{2 \cdot 3^m-5}{2 \cdot 3^m}\Big(\frac{3}{2}\Big)^{
\frac{n_{k, 3, m}}{2}}-2.
\end{equation*}
Let $n\geq n_{k, p, m}$ be an even positive integer. Then there is no bent partition $\Gamma=\{A_{i}, i \in \mathbb{F}_{p}^{m}\}$ of $\mathbb{F}_{p}^{n}$ in class $\mathcal{WRBP}$ (where the corresponding sign is $\epsilon$) with $\{x \in \mathbb{F}_{p}^{n}: 0\leq wt(x)\leq \frac{n}{2}-k-1\}\subseteq A_{i_{0}}$ for some $i_{0} \in \mathbb{F}_{p}^{m}$.
\end{theorem}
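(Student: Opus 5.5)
The plan is to apply the MacWilliams-like identity of Theorem~\ref{Theorem 4} to the distinguished set $A_{i_{0}}$, evaluated at a point chosen so that the large Hamming ball inside $A_{i_{0}}$ dominates. Comparing sizes should then give a contradiction once $n\geq n_{k,p,m}$.

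First, by Theorem~\ref{Theorem 1} and its proof we have $h(0)=0$, $h(ca)=ch(a)$ and $G(ca)=G(a)$ for $c\in\mathbb{F}_{p}^{*}$. Eq.~\eqref{co5} gives $|D_{G,i}|=p^{n-m}+\epsilon p^{\frac n2-m}(p^m\delta_{G(0)}(i)-1)$, and the same formula gives the sizes $|A_i|$. In particular, the set $A_{i_{0}}$ containing the ball $B=\{x: wt(x)\le \frac n2-k-1\}$ has $|A_{i_{0}}|\ge |B|$. Since $|B|$ is of order $(p-1)^{n/2}\binom{n}{n/2}$, this already forces $i_{0}=G(0)$ for large $n$, because otherwise $|A_{i_{0}}|=p^{n-m}-\epsilon p^{\frac n2-m}$. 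That case is handled by the same estimate below anyway.

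Next, specialize Eq.~\eqref{th40} (equivalently Eq.~\eqref{th41}) by setting $x=1$ and $y=-\frac{1}{p-1}$, i.e.\ $z=-\frac{1}{p-1}$. This kills the term $(x+(p-1)y)^n$ and makes the left side $W_{A_{i_{0}}}(0,\frac{p}{p-1})=\big(\frac{p}{p-1}\big)^n M_n(A_{i_{0}})$. The right side becomes $p^{n-m}$ plus three sums $\sum_{a\in D}\big(-\frac{1}{p-1}\big)^{wt(a)}$ over $D\in\{D_{G,i_0}\cap D_{h,0},\,D_{G,i_0},\,D_{h,0}\}$.

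If this choice turns out not to expose the ball, I would instead use the coefficient form, Eq.~\eqref{co0}. The conditions $M_j(A_{i_0})=\binom nj (p-1)^j$ for $j\le \frac n2-k-1$ would then be summed against suitable weights. Concretely, I would look at the last $k+1$ coefficients after substituting $Y\mapsto$ the dual variable, which is exactly where a factor $\binom{\frac n2+k+1}{k+1}$ naturally appears. This amounts to counting the monomials $(1-Y)^t(1+(p-1)Y)^{n-t}$ that contribute up to that degree.

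In either form, I then bound every term on the right trivially, using $|(-\frac1{p-1})^{wt}|\le1$ or $|K_j(t)|$ bounds together with the known cardinalities $|D_{G,i}|$ and $|D_{h,0}|\le p^{n}$. On the left, the ball $B\subseteq A_{i_0}$ forces the complementary count of elements of $A_{i_0}$ to be at most $|A_{i_0}|-|B|$. Normalizing by $p^{\frac n2}$ should produce an inequality of the form
\[
\binom{\tfrac n2+k+1}{k+1}\;\ge\; \frac{p(p-2)(p^m-1)-1}{p^m(p-1)}\Big(\frac{p}{p-1}\Big)^{\frac n2}-(p-1)
\]
when $\epsilon=1$. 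For $\epsilon=-1$ (so $p=3$), the sign flips make the constant $\frac{2\cdot 3^m-5}{2\cdot3^m}$ and the shift $-2$. This contradicts the defining property of $n_{k,p,m}$. Since the right side grows exponentially in $n$ while the left grows polynomially, the failure persists for all even $n\ge n_{k,p,m}$, which requires a short monotonicity check.

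The main obstacle is this middle step: picking the evaluation or linear combination of the identities so that the ball contributes the clean main term $\big(\frac{p}{p-1}\big)^{n/2}$ with exactly the stated constant, while the error terms from $D_{G,i_0}$ and $D_{h,0}$ collapse to $\binom{\frac n2+k+1}{k+1}$ and $p-1$. I would first try the direct substitution above and track the constants. If the constants do not match, I would fall back to the coefficient extraction via Eq.~\eqref{co0} and expect to use $h(ca)=ch(a)$ to control $D_{h,0}$ modulo $\mathbb{F}_p^*$-orbits.
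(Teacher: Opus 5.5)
Your proposal has a genuine gap at exactly the step you call the main obstacle, and neither of your concrete suggestions can fill it. With $x=1$, $y=-\frac{1}{p-1}$, the left side of Eq.~\eqref{th40} is $W_{A_{i_0}}(0,\frac{p}{p-1})=(\frac{p}{p-1})^{n}M_n(A_{i_0})$. Only vectors of full weight $n$ contribute, so the hypothesis on the ball of radius $\frac n2-k-1$ never enters, and no inequality involving $k$ can come out.

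The fallback of bounding Eq.~\eqref{co0} termwise in absolute value also fails. With $|K_j(t)|\le K_j(0)=\binom nj(p-1)^j$ and $|D_{h,0}|\le p^n$, the $D_{h,0}$-term alone is only bounded by $\frac{p}{p-1}p^{\frac n2-m}\binom nj(p-1)^j\ge\frac{p}{p-1}\binom nj(p-1)^j$, since $m\le\frac n2$. This exceeds the discrepancy $(1-p^{-m})\binom nj(p-1)^j$ between $M_j(A_{i_0})=\binom nj(p-1)^j$ and the main term. So the identities are consistent with your trivial bounds and yield no contradiction: the argument has to exploit signs, not absolute values.

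A side remark: the ball does not force $i_0=G(0)$. For fixed $m$ and large $n$ its size is at most about $(2\sqrt{p-1})^n$, far below $p^{n-m}$ since $2\sqrt{p-1}<p$ for odd $p$; compare the Example after the theorem. This is harmless, since you do not need it.

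The missing idea is the paper's positivity argument. Pass to $A=\bigcup_{i\neq i_0}A_i$, so the hypothesis reads $M_j(A)=0$ for $j\le\frac n2-k-1$. Sum Eq.~\eqref{th40} over $i\neq i_0$ and evaluate at $x=\frac1p+z$, $y=\frac1p$.
\begin{itemize}
\item The left side becomes $\sum_j M_j(A)(1+z)^{n-j}z^j$, whose $z^s$-coefficient is $\sum_{j\le s}M_j(A)\binom{n-j}{n-s}$.
\item Every $W_C(\frac1p+z,\frac1p)$ has nonnegative coefficients in $z$, and $W_{\mathbb{F}_p^n}(\frac1p+z,\frac1p)=(1+z)^n$.
\item Replace each unknown term carrying a minus sign by $(1+z)^n$ minus the corresponding term for the complement. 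For $\epsilon=1$ these are $W_D$ and $W_S$, with $D=\bigcup_{i\neq i_0}D_{G,i}$ and $S=D_{h,0}$; for $\epsilon=-1$ it is $W_{D\cap S}$. All remaining unknown terms then carry a plus sign and can be dropped.
\item At $s=\frac n2+1$ and $\epsilon=1$ this gives $\sum_{j\le\frac n2+1}M_j(A)\binom{n-j}{\frac n2-1}\ge\big[(p^m-1)(p-1)p^{\frac n2-m}-\frac{p^{n/2}}{p-1}\big]\binom{n}{\frac n2-1}$. This is where the constant $\frac{p(p-2)(p^m-1)-1}{p^m(p-1)}$ comes from.
\end{itemize}

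Contrary to your expectation, $\binom{\frac n2+k+1}{k+1}$ and $-(p-1)$ do not arise from the $D_{G,i_0}$ and $D_{h,0}$ terms, which are simply discarded. They come from the matching upper bound on the left side. Using $M_j(A)=0$ for $j\le\frac n2-k-1$, $M_j(A)\le\binom nj(p-1)^j$, and $\binom nj\binom{n-j}{\frac n2-1}=\binom{n}{\frac n2-1}\binom{\frac n2+1}{j}$, that sum is at most $(p-1)^{\frac n2}\binom{n}{\frac n2-1}\big[\binom{\frac n2+k+1}{k+1}+(p-1)\big]$. The last summand is the $j=\frac n2+1$ term.

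Your monotonicity remark is fair, since the paper tacitly assumes the strict inequality persists for all even $n\ge n_{k,p,m}$. But without the evaluation point and sign structure above, your outline produces no inequality involving the ball at all.
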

\begin{proof}
Let $\Gamma=\{A_{i}, i \in \mathbb{F}_{p}^{m}\}$ be a bent partition of $\mathbb{F}_{p}^{n}$ in class $\mathcal{WRBP}$. By Theorem \ref{Theorem 4}, there are functions $G: \mathbb{F}_{p}^{n} \rightarrow \mathbb{F}_{p}^{m}$ and $h: \mathbb{F}_{p}^{n} \rightarrow \mathbb{F}_{p}$ such that Eq. \eqref{th40} holds. Suppose $0 \in A_{i_{0}}$. Define $A=\bigcup_{i \in \mathbb{F}_{p}^{m} \backslash \{i_{0}\}}A_{i}, D=\bigcup_{i \in \mathbb{F}_{p}^{m} \backslash \{i_{0}\}}D_{G, i}, S=D_{h, 0}$. By Eq. \eqref{th40}, we have
\begin{equation}\label{th51}
	\begin{split}
	& W_{A}(x+(p-1)y, x-y)\\
	&=\sum_{i \in \mathbb{F}_{p}^{m} \backslash \{i_{0}\}}W_{A_{i}}(x+(p-1)y, x-y)\\
	&=p^{n-m}(p^m-1)x^{n}+\epsilon p^{\frac{n}{2}-m}\frac{p^m-1}{p-1}(x+(p-1)y)^{n}+\epsilon p^{\frac{n}{2}+1}(p-1)^{-1}W_{D \cap S}(x, y)\\
	& \ \ \ -\epsilon p^{\frac{n}{2}}(p-1)^{-1}W_{D}(x, y)-\epsilon p^{\frac{n}{2}-m+1}\frac{p^m-1}{p-1}W_{S}(x, y).
	\end{split}
\end{equation}
Denote $M_{j}(A)=|\{x \in A: wt(x)=j\}|, 0 \leq j \leq n$. By the proof of \cite[Lemma 5.2]{HLL2014Ne}, 
for each $0 \leq k \leq \frac{n}{2}-1$,
\begin{equation*}
	\sum_{j=\frac{n}{2}-k}^{\frac{n}{2}}M_{j}(A)\binom{n-j}{\frac{n}{2}-1}\leq (p-1)^{\frac{n}{2}}\binom{n}{\frac{n}{2}-1}\binom{\frac{n}{2}+k+1}{k+1}.
\end{equation*}
If $\{x
\in \mathbb{F}_{p}^{n}: 0 \leq wt(x) \leq \frac{n}{2}-k-1\} \subseteq A_{i_{0}}$, that is, $M_{j}(A)=0$ for any $0 \leq j \leq \frac{n}{2}-k-1$, then
\begin{equation}\label{th55}
	\begin{split}
		\sum_{j=0}^{\frac{n}{2}+1}M_{j}(A)\binom{n-j}{\frac{n}{2}-1}&=\sum_{j=\frac{n}{2}-k}^{\frac{n}{2}}M_{j}(A)\binom{n-j}{\frac{n}{2}-1}+M_{\frac{n}{2}+1}(A)\\
		& \leq (p-1)^{\frac{n}{2}}\binom{n}{\frac{n}{2}-1}\binom{\frac{n}{2}+k+1}{k+1} +(p-1)^{\frac{n}{2}+1}\binom{n}{\frac{n}{2}-1}.
	\end{split}
\end{equation}
For any set $C \subseteq \mathbb{F}_{p}^{n}$, let $\overline{C}$ denote its complement, that is, $\overline{C}=\{x \in \mathbb{F}_{p}^{n}: x \notin C\}$.

Case I: $\epsilon=1$. Set $x=\frac{1}{p}+z, y=\frac{1}{p}$ in Eq. \eqref{th51}, then we obtain
\begin{equation}\label{th52}
\begin{split}
	&\sum_{j=0}^{n}M_{j}(A)(1+z)^{n-j}z^{j}\\
	&=p^{-m}(p^m-1)(1+pz)^{n}+p^{\frac{n}{2}-m}\frac{p^m-1}{p-1}(1+z)^{n}+p^{\frac{n}{2}+1}(p-1)^{-1}W_{D \cap S}(p^{-1}+z, p^{-1})\\
	& \ \ \ -p^{\frac{n}{2}}(p-1)^{-1}\Big((1+z)^{n}-W_{\overline{D}}(p^{-1}+z, p^{-1})\Big)-p^{\frac{n}{2}-m+1}\frac{p^m-1}{p-1}\Big((1+z)^{n}-W_{\overline{S}}(p^{-1}+z, p^{-1})\Big).
\end{split}
\end{equation}
Comparing the coefficients of $z^{s} \ (0 \leq s \leq n)$ on both sides of Eq. \eqref{th52}, we deduce that
\begin{equation}\label{th53}
	\begin{split}
	&\sum_{j=0}^{s}M_{j}(A)\binom{n-j}{n-s} \\
	& \geq p^{s-m}(p^m-1)\binom{n}{s}+p^{\frac{n}{2}-m}\frac{p^m-1}{p-1}\binom{n}{s}-p^{\frac{n}{2}}(p-1)^{-1}\binom{n}{s}-p^{\frac{n}{2}-m+1}\frac{p^m-1}{p-1}\binom{n}{s}\\
	&=(p^m-1)(p^{s-m}-p^{\frac{n}{2}-m})\binom{n}{s}-p^{\frac{n}{2}}(p-1)^{-1}\binom{n}{s}.
	\end{split}
\end{equation}
Set $s=\frac{n}{2}+1$ in Ineq.\eqref{th53}, then
\begin{equation}\label{th54}
\sum_{j=0}^{\frac{n}{2}+1}M_{j}(A)\binom{n-j}{\frac{n}{2}-1}\geq (p^m-1)(p-1)p^{\frac{n}{2}-m}\binom{n}{\frac{n}{2}-1}-p^{\frac{n}{2}}(p-1)^{-1}\binom{n}{\frac{n}{2}-1}.
\end{equation}
Comparing Ineqs. \eqref{th55} and \eqref{th54}, we have
\begin{equation*}
	\binom{\frac{n}{2}+k+1}{k+1}\geq \frac{p(p-2)(p^m-1)-1}{p^m(p-1)}\Big(\frac{p}{p-1}\Big)^{\frac{n}{2}}-(p-1).
\end{equation*}

Case II: $\epsilon=-1$. In this case, $p=3$. Set $x=\frac{1}{3}+z, y=\frac{1}{3}$ in Eq. \eqref{th51}, then 
\begin{equation}\label{th56}
	\begin{split}
		&\sum_{j=0}^{n}M_{j}(A)(1+z)^{n-j}z^{j}\\
		&=3^{-m}(3^m-1)(1+3z)^{n}-3^{\frac{n}{2}-m}\frac{3^m-1}{2}(1+z)^{n}-3^{\frac{n}{2}+1}2^{-1}\Big((1+z)^{n}-W_{\overline{D \cap S}}(3^{-1}+z, 3^{-1})\Big)\\
		& \ \ \ +3^{\frac{n}{2}}2^{-1}W_{D}(3^{-1}+z, 3^{-1})+3^{\frac{n}{2}-m+1}\frac{3^m-1}{2}W_{S}(3^{-1}+z, 3^{-1}).
	\end{split}
\end{equation}
Comparing the coefficients of $z^{s} \ (0 \leq s \leq n)$ on both sides of Eq. \eqref{th56}, we deduce that
\begin{equation}\label{th57}
		\sum_{j=0}^{s}M_{j}(A)\binom{n-j}{n-s}\geq 3^{s-m}(3^m-1)\binom{n}{s}-3^{\frac{n}{2}-m}\frac{3^m-1}{2}\binom{n}{s}-3^{\frac{n}{2}+1}2^{-1}\binom{n}{s}.
\end{equation}
Set $s=\frac{n}{2}+1$ in Ineq. \eqref{th57}, then
\begin{equation}\label{th58}
	\sum_{j=0}^{\frac{n}{2}+1}M_{j}(A)\binom{n-j}{\frac{n}{2}-1}\geq \Big(\frac{5}{2}(3^m-1)3^{\frac{n}{2}-m}-3^{\frac{n}{2}+1}2^{-1}\Big)\binom{n}{\frac{n}{2}-1}.
\end{equation}
Comparing Ineqs. \eqref{th55} and \eqref{th58}, we have
\begin{equation*}
	\binom{\frac{n}{2}+k+1}{k+1} \geq \frac{2\cdot 3^m-5}{2 \cdot 3^m}\Big(\frac{3}{2}\Big)^{\frac{n}{2}}-2.
\end{equation*}
\end{proof}

\begin{example}
Let $k=0, p=3, m=2, n=14$. It is easy to check that $n$ satisfies the condition in Theorem \ref{Theorem 5} (whenever $\epsilon=1$ or $-1$). By Theorem \ref{Theorem 5}, there is no bent partition $\Gamma=\{A_{i}: i \in \mathbb{F}_{3}^{2}\}$ of $\mathbb{F}_{3}^{14}$ in class $\mathcal{WRBP}$ with $\{x \in \mathbb{F}_{3}^{14}: 0 \leq wt(x) \leq 6\} \subseteq A_{i_{0}}$ for some $i_{0} \in \mathbb{F}_{3}^{2}$. Notice that $|\{x \in \mathbb{F}_{3}^{14}: 0 \leq wt(x) \leq 6\}|=275577$, and by Eq. \eqref{add}, for each bent partition $\Gamma=\{A_{i}: i \in \mathbb{F}_{3}^{2}\}$ of $\mathbb{F}_{3}^{14}$ in class $\mathcal{WRBP}$, $|A_{i}| \geq 529497$.
\end{example}

\section{Conclusion}
Let $\Gamma=\{A_{i}, i \in V_{m}\}$ be a weakly regular bent partition of $V_{n}$. In this paper, we gave characterizations of $\Gamma$ in terms of the duals of component functions of vectorial bent functions, Buston Hadamard matrices, and two classes of linear codes. We proved that all bent functions $f$ generated by $\Gamma$ are regular if $p \neq 3$, that is, $\epsilon=1$ if $p \neq 3$. When $n \geq 4, m=\frac{n}{2}$ (that is, the depth of a weakly regular bent partition reaches its maximal value), except for the case where $(p, n)=(2, 4)$ and $|A_{i_{0}}|=1$ for some $i_{0}$, we showed that the bent partition $\Gamma$ is constructed by using a spread (after a shift). In particular, each weakly regular normal bent partition is constructed by using a spread (after a shift), where $n \geq 4$. Furthermore, we gave Macwilliams-like identities for the bent partition $\Gamma$. When $p$ is odd, some nonexistence results on weakly regular bent partitions were presented.

\end{document}